\begin{document}

\begin{center}
  {\LARGE Modeling Interference Using Experiment Roll-out} \\
  \vspace{.4cm}
  {\large Ariel Boyarsky$^1$ ~~~~  Hongseok Namkoong$^1$ ~~~~ Jean Pouget-Abadie$^2$} \\
  \vspace{.2cm}
  {$^1$Decision, Risk, and Operations Division, Columbia Business School, $^2$Google Research}  \\
  \vspace{.2cm}
  \texttt{\{aboyarsky26, namkoong\}@gsb.columbia.edu, jeanpa@google.com}
\end{center}

\vspace{.2cm}

\begin{abstract}
Experiments on online marketplaces and social networks suffer from \emph{interference}, where the outcome of a unit is impacted by the treatment status of other units. We propose a framework for modeling interference using a ubiquitous deployment mechanism for experiments, staggered  \emph{roll-out} designs, which slowly increase the fraction of units exposed to the treatment to mitigate any unanticipated adverse side effects. Our main idea is to leverage the temporal variations in treatment assignments introduced by roll-outs to model the interference structure. Since there are often multiple competing models of interference in practice we first develop a  model selection method that evaluates models based on their ability to explain outcome variation observed along the roll-out. Through simulations, we show that our heuristic model selection method, \textsc{Leave-One-Period-Out}, outperforms other baselines. Next, we present a set of model identification conditions under which the estimation of common estimands is possible and show how these conditions are aided by roll-out designs. We conclude with a set of considerations, robustness checks, and potential limitations for practitioners wishing to use our framework.
\end{abstract}


\section{Introduction}
\label{sec:intro}

Experimentation is at the core of scientific decision-making in many online platforms. However, many of the experiments run by these platforms suffer from interference, meaning an intervention on a participant might impact the outcome of other participants. For example, interference is problematic in online marketplaces where strategic agents compete for scarce resources~\citep{pouget2018optimizing, Ugander2020graph, li2021twosided, Niazdeh2021cluster, brennan2022cluster, Bojinov2022switchbacks, Bojinov2022Online, lobel2022interference}, and in online social networks~\citep{eckles2016estimating, Ugander2022Spillover} where peers interact and influence each other. Interference among units leads to a violation of the Stable Unit Treatment Value Assumption~\citep{Rubin2005POs,imbens2015causal} and the causal effect may not even be identifiable in a randomized experiment.

A common practical approach to estimating causal effects when interference is present is to assume a specific potential outcome structure based on the network(s) between participants. Together, the postulated potential outcome structure and the underlying network(s) constitute an \emph{interference model}, which can be used to estimate the true causal effect directly~\citep{eckles2017design,biswas2018estimating}, or modify the experimental design to mitigate the effects of interference~\citep{basse2018model}. 
As a motivating example, which we will explore further in Section~\ref{sec:setting}, consider an online auction platform incrementally rolling out an experiment that tests the effectiveness of increasing reserve prices. Treating one auction is likely to affect each participating advertisers, which can in turn affect other auctions in which they participate. Practitioners have postulated several working models of interference for how advertisers and auctions interact with one another.  Each modeling choice might lead to a different estimated treatment effect. 

Such ad-hoc structural modeling, however, is often unreliable in practice since the interference network is rarely known and the potential outcome structure is almost always misspecified.  As a result, selecting and validating models of interference is a crucial---but difficult---step of estimating causal effects when interference is present. Furthermore, even when committing to a fixed interference model, estimating its parameters may be infeasible if the observed data does not contain enough variation in treatment exposure across units.

We propose a framework for modeling interference using a ubiquitous deployment mechanism for experiments, \emph{staggered roll-out designs}. Online platforms rely on staggered roll-out designs~\citep{kohavi2009controlled, xu2015infrastructure, xu2018sqr} as an early detection tool for any unintended consequences caused by a new experiment or product launch, e.g., software bugs or adverse participant responses. Roll-out designs follow a simple principle: instead of intervening on all participants marked for the intervention all at once, the proportion of participants exposed to the intervention is increased incrementally until all participants marked for the intervention have been intervened on. The practice is nearly universal across standard experimentation infrastructures; for example,~\citet{xu2018sqr} notes that out of 5000+ experiments run annually at LinkedIn, ``every experiment goes through a \emph{ramp-up/roll-out} process.''

In this paper, we propose to leverage existing roll-out designs for a new purpose: to better select and estimate models of interference in causal estimation when interference is present. Despite its ubiquity, the temporal variation in treatment exposure 
induced by roll-out designs is a promising yet largely overlooked consequence of how experiments are implemented in practice. Variations in treatment proportions, whether temporal~\citep{bojinov2022design, Li2022} or spatial~\citep{athey2018exact, baird2018optimal}, are key to modeling interference and validating modeling choices.  


As our first main contribution, we utilize roll-out experimental designs to develop a model selection mechanism, \textsc{Leave-One-Period-Out}, to select the best interference model and evaluate this mechanism over a rich set of simulated examples. We show that, given an
interference model,  roll-out designs allow the identification of interference parameters, which may be unidentified in the absence of any temporal variation in treatment exposure. We theoretically characterize when the causal estimand becomes identifiable with the help of roll-out designs and quantify the level of temporal variation required to identify heterogeneous patterns across units. Finally, we quantify the statistical efficiency gains resulting from a roll-out design. 


In Section~\ref{sec:setting}, we introduce our estimand of interest, define roll-out designs, and provide a motivating example.  In Section~\ref{sec:Model Selection}, we develop our heuristic modeling framework and provide experimental evidence supporting our model selection algorithm. In Section~\ref{sec:Identification}, we study the question of identification, specifically how roll-outs  help identify and estimate causal effects in the presence of interference. Finally, in Section~\ref{sec:considerations}, we address challenges practitioners may face when applying our framework and consider possible robustness checks that could be performed.



\subsection*{Related work}
\label{sec:related_works}

The literature on causal inference in the presence of interference is extensive (e.g., see the work of \citet{Rosenbaum2007Interference,Hudgens2008Interference, Savje2017ATE, Leung2019Interference, Farias2022Markov, viviano2020interference}). Our work is related to the subset of this literature that focuses on new experimental designs to mitigate the effects of interference~\citep{eckles2017design, baird2018optimal, brennan2022cluster}; we repurpose roll-outs, a common design in practice,  to model interference.  
While several authors have proposed new estimands and estimators to better understand the mechanism of interference and reduce its bias-inducing effects~\citep{yuan2021causal, karrer2021network, yu2022estimating, zigler2021bipartite}, we study common models of interference~\citep{aronow2017estimating, basse2016randomization} and focus on the familiar total treatment effect estimand~\citep{chin2019regression}. Instead of suggesting new estimators and estimands, we leverage roll-out designs to formulate a new method for validating and choosing from these previously introduced models.

The operational benefits of roll-out designs have been well-documented in the context of online platforms~\citep{kohavi2009controlled, xu2015infrastructure, xiong2020optimal}. However, the study of roll-out designs is sparse in the context of interference. \citet{yuetal2022rollout} recently studied roll-out designs to develop unbiased estimates of treatment effects. The work is similar to ours in that both frameworks do not assume the knowledge of the underlying network but make different assumptions to make inference possible. 
\citet{yuetal2022rollout} takes a design-based perspective in which given a low-degree polynomial structure for interference they design a roll-out with an unbiased estimator of the treatment effect. In contrast, our heuristic model selection framework applies to any roll-out design and interference model. When the interference model is well-specified, we provide theoretical guarantees quantifying how roll-out designs boost statistical efficiency.

Validating and choosing from potential outcome models that incorporate interference bears many similarities to the task of detecting interference, which has historically relied on one of two methods. The first set of methods is to compare two designs with different properties under SUTVA and interference, often simultaneously using a hierarchical design structure~\citep{sinclair2012detecting, saveski2017detecting, pougetabadie2019testing}. A second approach consists in running Fisher-randomized-like tests on observed data to determine the significance of well-chosen estimators that are non-zero if and only if interference is present~\citep{aronow2017randomizationinference, athey2018exact, basse2019randomization}. Both approaches seek to exploit fluctuations in a specific parameter of interference to determine whether (1) interference is present and potentially (2) whether it occurs in the form or through the channel of that parameter. 

Roll-outs provide a third paradigm for detecting interference, as they introduce desirable fluctuations in interference-sensitive parameters such as the global treatment fraction. In a concurrent and independent work,~\citet{Li2022} study how roll-outs can be used to detect interference. They design randomization tests that can be used to detect cross-unit interference even in the presence of temporal effects. In comparison, the present paper goes beyond detection: at the cost of stronger modeling assumptions, we study the direct modeling of interference effects. Since interference may be common in online platforms, rather than establishing its existence, we propose estimation and model selection  methods that allow contextualizing the operational significance of interference effects compared to the direct treatment effect. While our modeling approach can also be used as a heuristic test for interference, we do not establish its theoretical validity and instead focus on identifiability and estimation guarantees.


\section{Setting}
\label{sec:setting}

We use the potential outcomes notation for a finite
population of size $N$. We do not make the Stable Unit Treatment Value Assumption (SUTVA)~\citep{imbens2015causal} such that, for any treatment vector $z \in \{0,1\}^N$, the potential outcome $Y_i(z)$ of each unit $i$ may depend on the treatment status of other units due to interference. 
For concreteness, we focus on the identification and estimation of the total treatment effect estimand
\begin{equation}
\label{eq:TTE}
\TTE \defeq \frac{1}{N} \sum_{i=1}^N 
\left( Y_i(\mymathbb{1}) - Y_i(\mymathbb{0}) \right).
\end{equation} 
In practice, the TTE is of particular relevance to online platforms whose goal is to determine whether a product innovation is fruitful when it is completely adopted~\citep{Bond2012voter,eckles2017design}.

Roll-outs, also known as ``ramp-ups'', are a common experimental practice where instead of assigning treatments at once, it is done in incremental steps. While primarily instituted to improve engineering reliability, they induce important temporal variation in a unit's  treatment exposure that can be used to better model the effect of interference in randomized experiments.  Formally, a roll-out design with $T$ periods consists of a  sequence of treatment assignments $\{Z^t\}_{t=1}^T$ and corresponding observed
outcomes $\{Y_{i, t} \defeq Y_i(Z^t)\}_{t \in [T], i \in [N]}$ such that, once
a unit is treated, it remains treated for the remainder of the experiment. For
typical experiments, $T$ is a small number, often equal to $5$ or less. 
The \textit{completely randomized roll-out design} considers a fixed proportion of units to be newly treated at each period~\citep{Lee2022}. 


\begin{definition}[Completely Randomized Roll-outs] \label{def:CRD} A \textbf{$T$-period completely randomized roll-out} is an increasing set of random treatment assignments, $\{Z^1, \dots, Z^T\}$, and treatment allocation vector $\vec{p}=\{p_1,\dots,p_T\}$ with $\sum_{t=1}^T p_t \leq 1$ such that in each period, $t$, $Z_i^t \in \{0,1\}$ is randomly chosen such that $\sum_{i=1}^N Z_{i}^t=\left\lfloor N\sum_{j=1}^t p_t \right\rfloor$ and if $Z_{i}^{t-1}=1$ then $Z_{i}^t=1$. 
\end{definition}
\noindent Another roll-out design is to specify an independent Bernoulli probability to treat each unit, known as a \textit{Bernoulli randomized design}. This design does not meaningfully change the conclusions of our work, and we defer its definition to Section \ref{subsec:rollout} of the appendix. 
Roll-out designs are characterized by the proportion of newly treated individuals in each period: ``even'' (resp. ``uneven'') roll-outs treat the same (resp. different) incremental proportions of individuals in each period. Even among even and uneven roll-outs, there are several possible roll-out mechanisms for assigning treatments, corresponding to different joint distributions over $Z^t$. 


\vspace{10pt}
\begin{example}[Linear-in-Means Models with Heterogeneity]
  \label{ex:linearmodels}
  As a motivating example, we consider an advertising auction system where bidders compete for limited items. We are interested in measuring the impact of changing the reserve price---the minimum required bid to participate in the auction---on advertisers' spend (outcomes). The example models common operational concerns on online platforms~\citep{pouget2018optimizing}.  In this two-sided marketplace with finite resources, interference occurs when changing the reserve price for some items leads bidders to change their bidding strategy, thus affecting the outcome of other auctions they participate in. 
  
  While a bipartite graph of bidders and auctions is usually used to represent the full market, in statistical inference, it is common to reduce the bipartite market structure to a single interference graph between the $N$ items~\citep{brennan2022cluster}. In this interference graph, edges represent a notion of competition, e.g., substitutable keywords (goods). As we substantiate further in Section~\ref{sec:Model Selection}, there are multiple ways to construct the item-to-item interference network in practice: we may consider whether differences in advertising budgets should be taken into account or whether two items are considered in competition if their co-bidders achieve a certain activity threshold.
  
  For concreteness, consider two plausible and competing ways of defining
  ``neighboring units'' for any given unit $i$, $\mathcal{G}_1(i)$ and $\mathcal{G}_2(i)$,
  based on two different interference
  networks. For each notion of ``neighborhood'', we can posit a simple linear model of
  interference
  \begin{subequations}
  \label{eqn:two-models}
  \begin{align}
    Y_i^t(Z^t) & = \alpha_i\opt    + \tau\opt   \cdot Z_{i}^t
                        + \eta_1\opt   \cdot \sum_{j \in \mathcal{G}_1(i)} Z_{j}^t + \epsilon_{i,t} 
                            \label{eqn:N1-model} \\
    Y_i^t(Z^t) & =  \alpha_i\opt    + \tau\opt   \cdot Z_{i}^t + \eta_2\opt     \cdot \sum_{j \in \mathcal{G}_2(i)} Z_{j}^t + \epsilon_{i,t}.
    \label{eqn:N2-model}
  \end{align}
  \end{subequations}
   Linear models similar to the ones above have been previously studied by~\citet{eckles2017design, aronow2017estimating, basse2019formation}. Given these two competing models of interference~\eqref{eqn:two-models}, the better model can be selected by measuring each model's ability to explain the variation in the outcomes as treatment exposure increases in the roll-out. 
\end{example}

\section{Model Selection}
\label{sec:Model Selection}

To motivate the model selection problem, we again consider the class of models defined in Example \ref{ex:linearmodels}. This example captures how a standard linear model can be highly flexible, allowing us to incorporate a wide range of interference structures. Its richness highlights the need to distinguish between model instances that are useful in explaining interference and those that are not. 

In this section, we propose a model selection mechanism inspired by leave-one-out cross-validation to choose between models of interference.  A unit's outcome depends on its treatment exposure which varies as we increase the treatment allocation throughout a roll-out. Our main observation is that we can test whether the selected interference model, trained on a subset of treatment periods, is able to extrapolate to different levels of treatment exposure by evaluating its predictive performance on observations from remaining periods. Hence, we use the mean-squared prediction error for a given period $t$,
\begin{equation}\label{eq:MSPE}
    \MSPE_t(\what{\theta}) \defeq \frac{1}{N}\sum_{i=1}^{N} (X_i^t\what{\theta}-Y_i^t)^2 = \frac{1}{N}\sum_{i=1}^{N} (\hat{Y}_i^t(\what{\theta})-Y_i^t)^2 ,
\end{equation}
where $Y_i^t(\what{\theta})=X_i^t\what{\theta}$, $X_i^t \in \R^{1\times K}$ refers to the row of features for unit $i$ at time $t$, and $\what{\theta}\in\R^K$ are estimated parameters. We can relate the $\MSPE$ to the estimation error of the $\TTE$. For instance, suppose that we are given data for a new roll-out period, $s$, of the form $({X}^{s}, {Y}^{s})$ where ${X}^{s} \in \R^{N\times K}$ and ${Y}^{s} \in \R^{N}$ (cf. Eq. \eqref{eq:features_period}). Using the previous roll-out periods to estimate $\what{\theta}$, we predict $\hat Y^{s}(\what{\theta}) = X^{s}\what{\theta}$. Define the sample covariance matrix as $\Sigma^{(s)} = \frac{1}{N}\sum_{i=1}^N {X}_{i}^{s}{X}_{i}^{s\top}$. Then, 
\begin{align*}
(\hat\TTE - \TTE)^2  &= [c^\top(\theta\opt    - \what{\theta})]^2
\\&\leq \ltwo{c}^2 \cdot \frac{||\what{\theta} - \theta\opt||^2_{\Sigma^{(s)} }}{\lambda_{\min}(\Sigma^{(s)} )} 
= \ltwo{c}^2 \frac{\frac{1}{N}\sum_{i=1}^N(X^{s}_i\what{\theta} - Y_i^{s}+\epsilon_{i,s})^2}{\lambda_{\min}(\Sigma^{(s)} )} 
\\&\leq  \frac{2\ltwo{c}^2}{\lambda_{\min}(\Sigma^{(s)} )}\cdot\frac{1}{N}\sum_{i=1}^N\left[(X^{s}_i\what{\theta} - Y_i^{s})^2+\epsilon_{i,s}^2\right]
= \frac{2\ltwo{c}^2}{\lambda_{\min}(\Sigma^{(s)} )}\cdot\left(\MSPE_{s}(\what{\theta}) + \frac{1}{N}\sum_{i=1}^N\epsilon_{i,s}^2\right)
\end{align*}
 where $\epsilon_{i,s} = Y_i^{s} - X_i^{s}\theta\opt   $ and the second inequality follows from convexity. 
 This provides a heuristic argument for using the $\MSPE$ criteria in our model section procedure.

\paragraph{Proposed method: Leave-One-Period-Out}

\begin{figure}[t]
    \centering
    \includegraphics[width=0.60\textwidth]{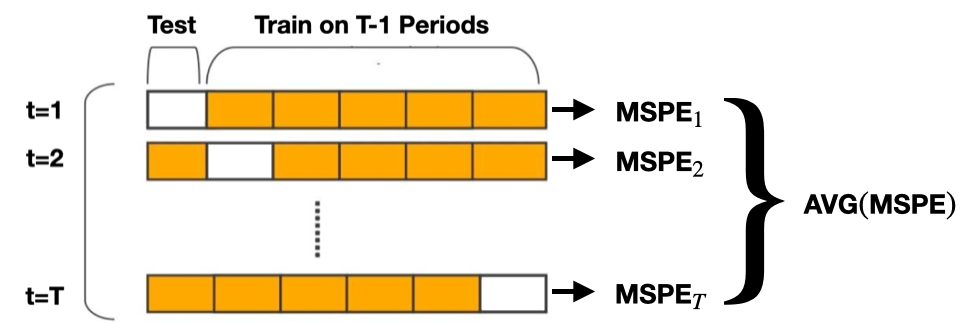}
    \caption{Graphical Representation of Leave-One-Period-Out}
    \label{fig:LOPO}
\end{figure}

\begin{algorithm}[ht]
  \begin{algorithmic}[1]
    \State \textsc{Input:} Data  $D=\{\{Y_{i}^t, X_i^{t^{(m)}} = [\mathcal{W}_{i,t}^{{(m)}},Z_{i}^t, \textbf{f}_i^{(m)}(Z^{t})] \}_{i\in[N],t\in [T]}\}_{m\in[M]}$ where $\mathcal{W}_{i,t}^{{(m)}}$ are other model $m$ specific features
    \For{$m \in [M]$}
    \For{$t \in [T]$}
    \State Estimate $\what{\theta}^{(m)}$ using data $D_{-t}^{(m)}$ (excluding period $t$)
    \State Predict $\hat Y^{t}(\what{\theta}^{(m)}) \gets X^{t^{(m)}}\what{\theta}^{(m)}$ using data $D_{t}^{(m)}$
    \State Store mean squared prediction error: $\MSPE_t(\what{\theta}^{(m)}) \gets \frac{1}{N}\sum_{i=1}^N(\hat{Y}_{i}^t(\what{\theta}^{(m)})-Y_i^t)^2$
    \EndFor
    \State $\text{MODEL\_MSPES}_m\gets \textsc{Average}(\MSPE)$
    \EndFor
    \State Return $\arg\min_{m\in[M]}\text{MODEL\_MSPES}_m$
      \caption{Leave-One-Period-Out Model Selection}
    \label{alg:LOPO_modelselect}
  \end{algorithmic}
\end{algorithm}

When interference is present, outcomes are non-stationary due to the increasing treatment allocation of a roll-out.  For example, an estimator that is fitted on the first period of a roll-out may not extrapolate to the last period. 
Our proposed procedure,~\textsc{Leave-One-Period-Out} (LOPO), leverages the fact that every period offers an opportunity to test an interference model's ability to extrapolate outcomes from different levels of treatment exposure.
In parallel, we leave out each period for testing and estimate parameters $\what{\theta}$ on all other periods. After we have predicted outcomes $\hat Y$ for every period, we compute the MSPE over each prediction task and output the model that minimized the average of these MSPEs. This procedure is visualized in Figure \ref{fig:LOPO} and formalized in Algorithm \ref{alg:LOPO_modelselect}.  While this is only a heuristic, our empirical results in Section \ref{subsec:sim_results} demonstrate its effectiveness.




\begin{table}
\centering
\def\arraystretch{1.5}%
\resizebox{\textwidth}{!}{\begin{tabular}{|p{3.2cm}|p{7.3cm}|p{2.3cm}|p{2.3cm}|}
\hline
\bf Model Selection Method  & \bf Overview                                                                               & \bf Considers Temporal Variation? & \bf Considers Network Structure? \\
\hline
\textsc{No Roll-out}              & Simulates additional periods all with $50\%$ of units treated. Then applies Algorithm \ref{alg:LOPO_modelselect}. & No                            & Yes                          \\
\textsc{Pooled $\mathcal{K}$-Fold} &  $\mathcal{K}$-Fold cross validation over units after pooling all periods together with $\mathcal{K}=10$.                                              & No                            & No                           \\
\textsc{Train First}             & Estimates model on first $T-1$ periods and evaluates model on period $T$.                   & Yes                           & Yes                          \\
\textsc{Train Last}              & Estimates model on last $T-1$ periods and evaluates model on first period.                & Yes                           & Yes                          \\
\textsc{LOPO} (Proposed) & Applies the procedure in Algorithm \ref{alg:LOPO_modelselect}.                                                   & Yes                           & Yes                         \\ \hline
\end{tabular}}
\caption{Model Selection Mechanisms}
\label{tbl:model_select_methods}
\end{table}

\paragraph{Baseline procedures} To evaluate our proposed \textsc{Leave-One-Period-Out} (LOPO) procedure, we compare its performance against reasonable baselines outlined in Table~\ref{tbl:model_select_methods}, including methods that incorporate both temporal and network structure and those that do not.
The \textsc{No-Roll-out} procedure considers what happens when the sample size is increased to match that from a roll-out design, but no temporal variation is generated in the interference structure because treatment status remains constant. The procedure provides a fair comparison against our proposed \textsc{LOPO} method in case the gains our method achieves are simply due to the increased effective sample size from a roll-out. We also include a \textsc{Pooled $\mathcal{K}$-Fold} procedure which pools the data across all periods and conducts standard $\mathcal{K}$-fold cross-validation. \textsc{Pooled $\mathcal{K}$-Fold} evaluates how well our proposed methodology works relative to standard cross-validation tools that implicitly assume all data points are exchangeable. 
Since the exchangeability assumption is violated due to interference,  the \textsc{LOPO} procedure circumvents this problem by exchanging whole periods. Finally, \textsc{Train First} and \textsc{Train Last} are in the same spirit as \textsc{LOPO} method, except they only consider steps $t=T$ and $t=0$ respectively of Algorithm~\ref{alg:LOPO_modelselect}. They preserve both network and temporal structures of the experiment and are less computationally intensive. There are other model selection methods not mentioned here that are similar to \textsc{LOPO}, e.g., train on the first and last periods and evaluate on all other periods. We find that in most cases \textsc{LOPO} achieves the best performance, and omit them from our comparisons.

 \subsection{Simulation Setup}\label{subsec:sim_results}

Continuing from Example~\ref{ex:linearmodels}, we consider a two-sided marketplace between advertisers and auctions. 
We anchor our experiments in a previously motivated setting where there are two competing interference graphs that might describe the observed interactions across advertisers~\citep{brennan2022cluster}. For example, one graph might consider an advertiser's historical spending for a certain window of time, whereas another might consider only certain types of spend or a different window of time. Even if both graphs consider the same bipartite graph to represent interactions between advertisers and auctions, there are different ways to ``fold'' this graph into a one-sided interference network of advertisers with other advertisers, as suggested by \citet{brennan2022cluster}.
Each folded graph leads to different interference neighborhoods, which we can define as $\mathcal{G}_1(\cdot)$ and define $\mathcal{G}_2(\cdot)$. We now generate data according to the \emph{true} model of interference described by $\mathcal{G}_1(i)$ 
\begin{align}
 Y_{i}^t(Z^t) &= \alpha\opt    + \tau\opt   \cdot Z_{i}^t
                    + \eta_1\opt   \cdot \sum_{j \in \mathcal{G}_1(i)} Z_{j}^t  + \epsilon_{i,t}.
\label{eq:true_model}
\end{align}
Let us define several competing models of interference to predict advertisers' spend. We want to test how reliably each model selection method can select the true model. We consider the following competing models,
\begin{align}
    Y_{i}^t(Z^t) &= \alpha\opt    + \tau\opt   \cdot Z_{i}^t
                        + \epsilon_{i,t}\label{eq:baselines_models_1} \\
    Y_{i}^t(Z^t) &= \alpha\opt    + \tau\opt   \cdot Z_{i}^t
                        + \eta_1\opt   \cdot \sum_{j \in \mathcal{G}_2(i)} Z_{j}^t + \epsilon_{i,t} \label{eq:baselines_models_2}
  \end{align}
The  model~\eqref{eq:baselines_models_1} assumes no interference and the model~\eqref{eq:baselines_models_2} considers an incorrect interference network defined by $\mathcal{G}_2(\cdot)$ as in Example \ref{ex:linearmodels}. 

In real-world applications, the effect of interference is typically thought to be smaller than the direct effect of treatment \citep{BlakeCoey2014Experimentation}. To capture this, we set $\tau\opt    = 5$, $\eta_1\opt    = 2$, and simulate data using $\epsilon_{i, t} \simiid N(0, 1)$. In each experiment, we observe outcome and treatment data from a $50\%$ completely randomized roll-out (cf. Def.~\ref{def:CRD}), such that $50\%$ of units are treated after the last period. We fit a linear regression model associated with the selected model to estimate the total treatment effect. Throughout, we assume $T=5$ in each experiment which coincides with many practical applications with few roll-out periods.

\begin{table}[t] 
\resizebox{\textwidth}{!}{
\begin{tabular}{@{\extracolsep{5pt}} P{20mm}P{22mm}P{22mm}P{22mm}P{22mm}P{22mm}} 
\\[-1.8ex]\hline 
\hline \\[-1.8ex] 
Figures & \textsc{No Roll-out} & \textsc{Pooled $\mathcal{K}$-Fold} & \textsc{Train First} & \textsc{Train Last} & \textsc{LOPO} (Proposed) \\ 
\hline \\[-1.8ex] 
Figure 5a & 81.2\newline
(80.4, 82.0) & 15.2\newline
(14.5, 15.9) & 51.6\newline
(50.6, 52.6) & 47.8\newline
(46.8, 48.8) & \textbf{12.6}\newline
\textbf{(12.0, 13.3)} \\ 
Figure 5b  & 81.4\newline
(80.6, 82.2) & 19.6\newline
(18.8, 20.4) & 45.8\newline
(44.8, 46.8) & 46.2\newline
(45.2, 47.2) & \textbf{15.0\newline
(14.3,15.7)} \\ 
Figure 6a  & 21.8\newline
(21.0, 22.6) & \textbf{19.0\newline
(18.2, 19.8)} & 39.6\newline
(38.6, 40.6) & 44.8\newline
(43.8, 45.8) & \textbf{19.0\newline
(18.2, 19.8)} \\ 
Figure 6b  & 45.8\newline
(44.8, 46.8) & \textbf{11.0\newline
(10.4, 11.6)} & 19.8\newline
(19.0, 20.6) & 18.6\newline
(17.8, 19.4) & 19.6\newline
(18.8, 20.4) \\ 
\hline \\[-1.8ex] 
\end{tabular}}
\caption{Percentage of Models Incorrectly Selected: Each row displays the percentage of incorrect model selections by each procedure for the simulation in the corresponding figure. 95\% bootstrapped confidence intervals are displayed in parentheses.} 
  \label{tbl:model_selection_results} 
\end{table} 

\subsection{Simulation Results}

In the experiments below, we consider several variations for establishing the effectiveness of the \textsc{LOPO} methodology. We first consider both even and uneven $50\%$ roll-outs. We then consider introducing additional interference terms to the true model and allowing for individual heterogeneity and time-varying effects. Lastly, we consider the effect of network sparsity by evaluating the performance of our model selection mechanisms as the underlying network density increases.

 \begin{figure}[t]
    \centering
    \subfigure[Even Roll-out]{\includegraphics[width=0.49\textwidth]{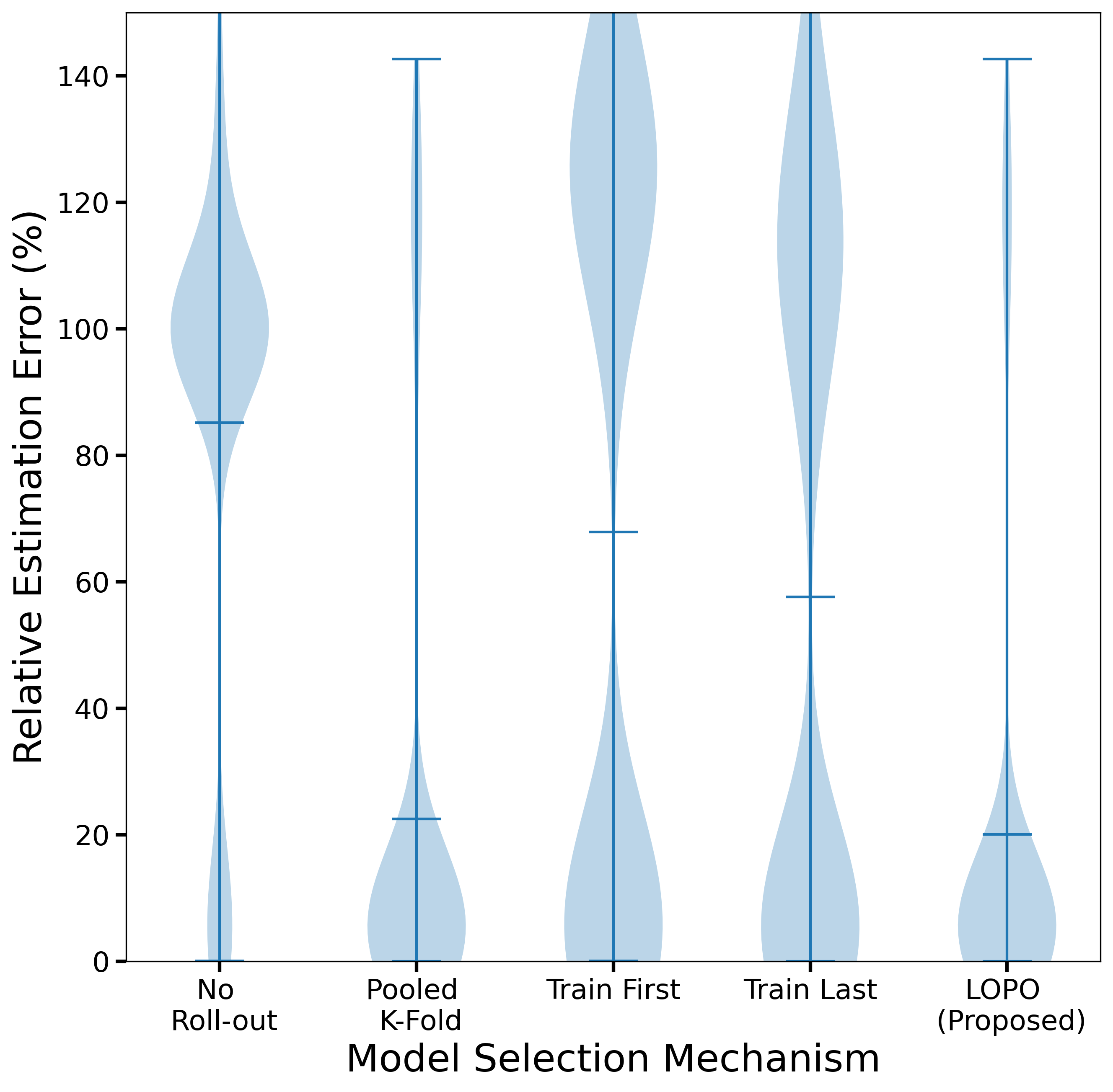}}
    \subfigure[Uneven Roll-out]{\includegraphics[width=0.49\textwidth]{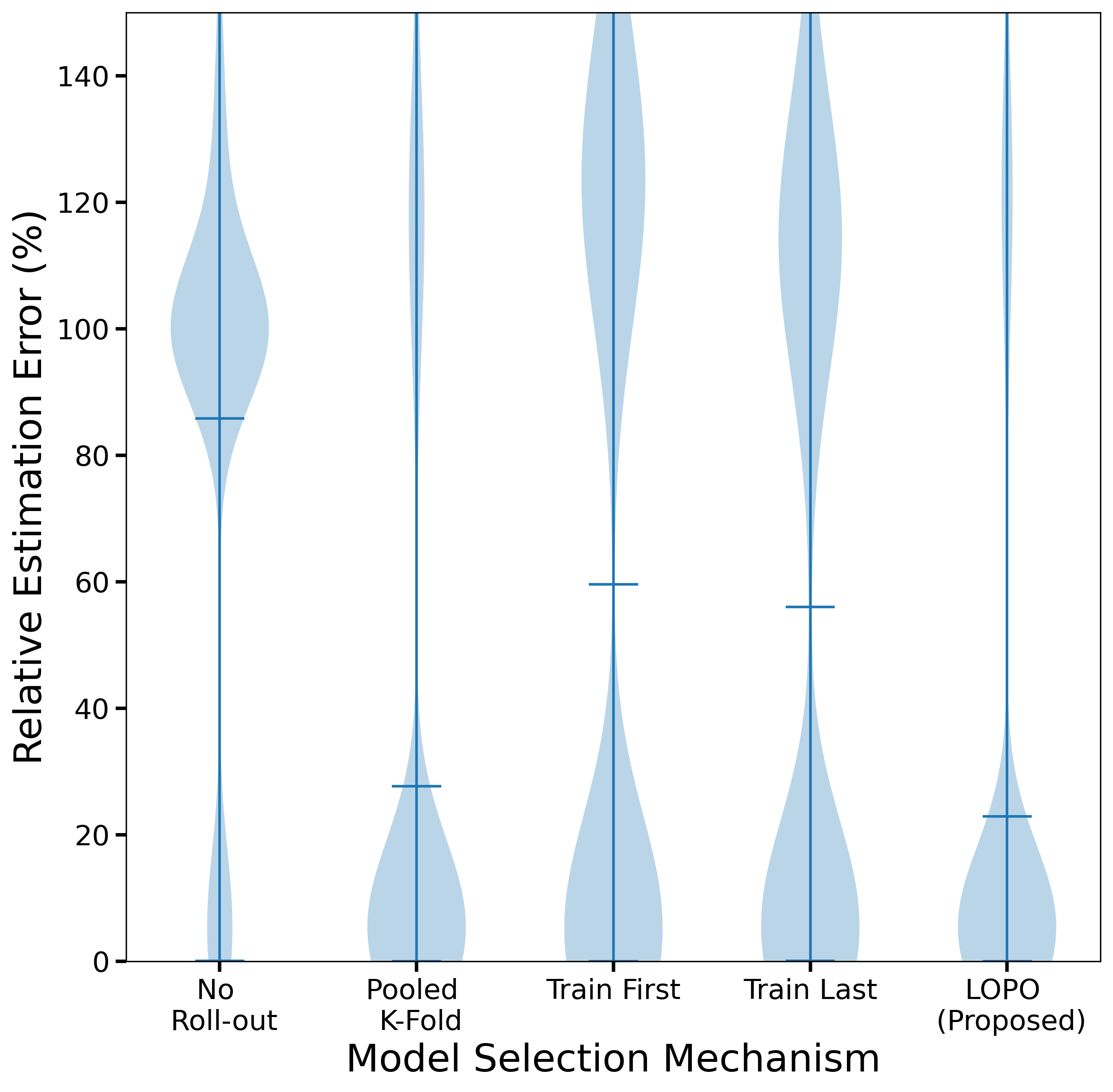}}
    \caption{Even vs. Uneven Roll-outs: figures are generated by averaging the results of 500 experiments. Each model is estimated with a sample size of $N=1000$ and $T=5$ periods. The plots show the distribution of relative estimation error (\%) for the model selection mechanism. The central tick marks represent the median.}
    \label{fig:sim_even_uneven}
\end{figure}

We evaluate each method on two metrics: how often it selects the correct model of interference and how well it minimizes the estimation error of the TTE regardless of whether it has chosen the correct model. Table \ref{tbl:model_selection_results} summarizes the percentage of times each model selection procedure selects the \emph{incorrect} model. However, since an incorrectly selected model could still yield similar estimates of the TTE, Figures \ref{fig:sim_even_uneven} and \ref{fig:sim_DGP} present the distribution of the relative absolute percent estimation error of the $\TTE$ over $500$ runs. 

\paragraph{Even vs. uneven roll-outs.} In this first experiment, we consider model selection in the even vs. uneven roll-out setting. In the even setting, we consider treatment increments of $10\%$. The uneven setting considers five treatment periods with proportions, $\vec{p}=[0.01, 0.09, 0.10, 0.15, 0.15]$, that sum to $0.50$. We observe the proposed \textsc{LOPO} procedure performs the best in both the even and uneven settings, verifying  our intuition outlined above. The \textsc{Pooled $\mathcal{K}$-Fold} procedure also does a good job at minimizing estimation error but performs worse than \textsc{LOPO} on selecting the correct model, as we observe in rows one and two of Table \ref{tbl:model_selection_results}. This may emphasize the importance of considering the underlying network structure in our selection procedure. In contrast, \textsc{Train First}, \textsc{Train Last}, and \textsc{No Roll-out} have very different performances across simulation settings and usually underperform relative to \textsc{LOPO}. \textsc{LOPO} tends to be more robust to changes in the roll-out schedule, which is useful when the practitioner cannot control the roll-out schedule. 

\begin{figure}[t]
    \centering
    \subfigure[Neighbors of Neighbors Interference]{\includegraphics[width=0.49\textwidth]{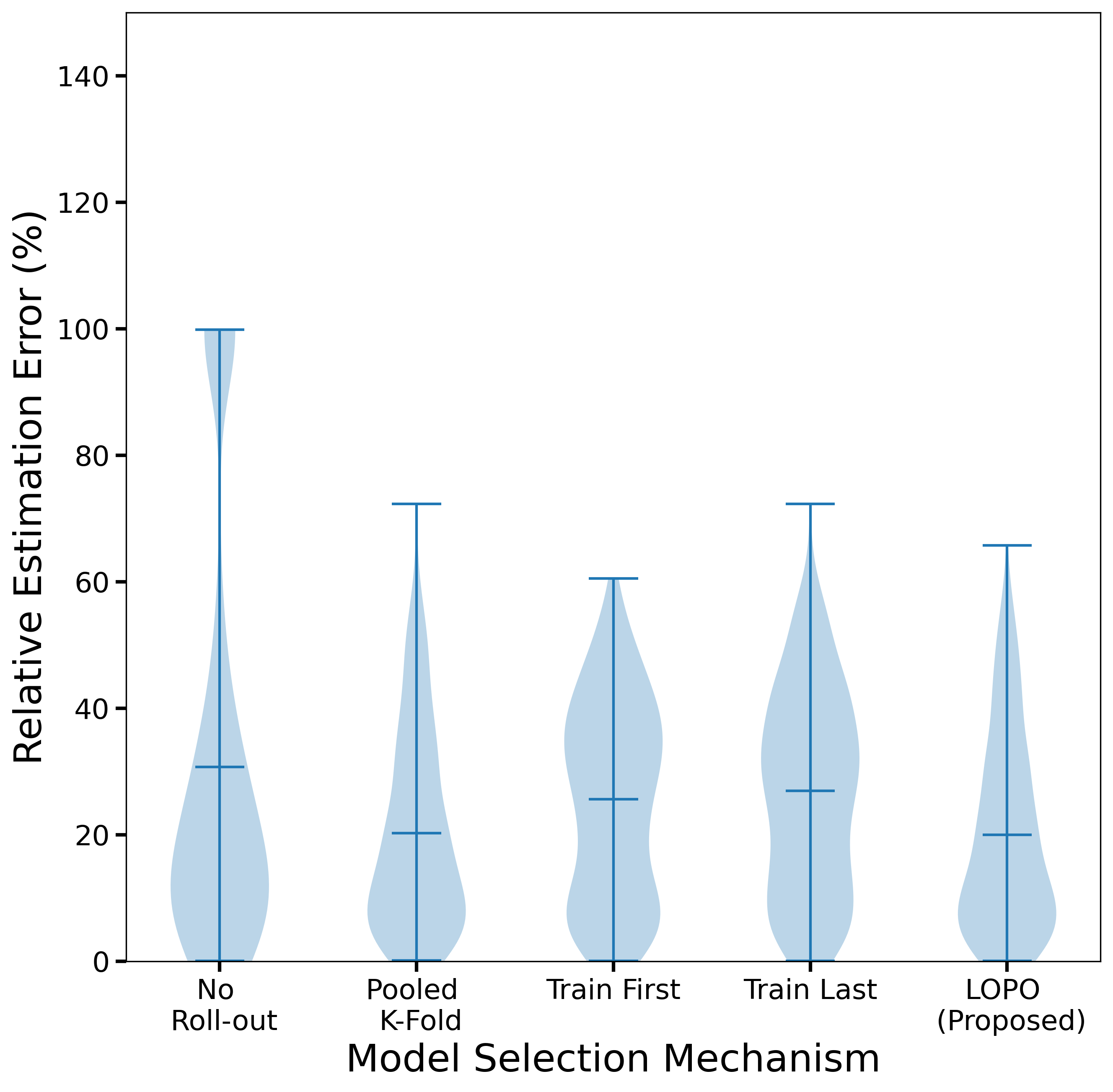}\label{fig:sim_2orderneighbors}}
    \subfigure[Individual Heterogeneity and Time Effects]{\includegraphics[width=0.49\textwidth]{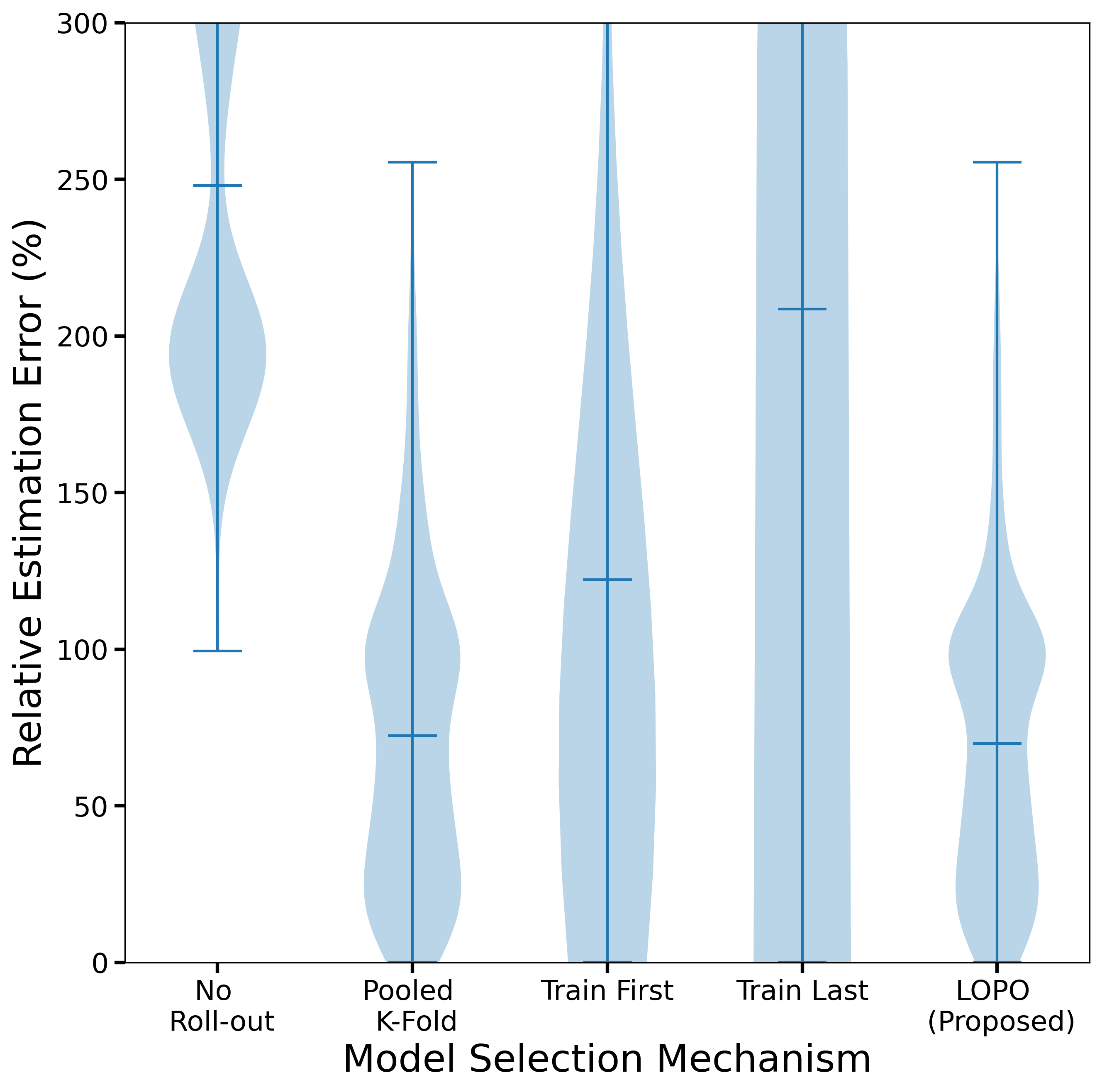}\label{fig:sim_FETE}}
    \caption{Varying the True Interference Model: Figures are generated by averaging the results of 500 experiments. Each model is estimated with a sample size of $N=1000$ and $T=5$ periods. The coefficient on 2nd-order neighbor interference is $\eta_2\opt    = 2$. The plots show the distribution of relative estimation error (\%) for the model selection mechanism. The central tick marks represent the median.}
    \label{fig:sim_DGP}
\end{figure}

\paragraph{Varying the true interference model} We now consider  variations to the data-generating model. Figure \ref{fig:sim_2orderneighbors} considers a true potential outcome model where interference comes from both first-order and second-order neighbors
\begin{equation}
 Y_{i}^t(Z^t) = \alpha\opt    + \tau\opt   \cdot Z_{i}^t
                        + \eta_1\opt   \cdot \sum_{j \in \mathcal{G}_1(i)} Z_{j}^t +
                        \eta_2\opt   \cdot \sum_{k \in \mathcal{G}_1^{(2)}(i)} Z_{k}^t
                        + \epsilon_{i,t}\,,
\end{equation}
where $\mathcal{G}_1^{(2)}(i)$ defines the set of neighbors-of-neighbors of unit $i$ under $\mathcal{G}_1(\cdot)$. Figure \ref{fig:sim_FETE} considers the performance of our model selection methods when adding individual heterogeneity and time-varying effects to the true potential outcomes model:
\begin{equation}\label{eq:FE_model}
 Y_{i}^t(Z^t) = \alpha_i\opt    + \gamma_t\opt    + \tau\opt   \cdot Z_{i}^t
                        + \eta_1\opt   \cdot \sum_{j \in \mathcal{G}_1(i)} Z_{j}^t
                        + \epsilon_{i,t}
\end{equation}
To make the comparison fair, we add these additional terms of the true models in Figure \ref{fig:sim_DGP} to all the alternate interference models specified in \eqref{eq:baselines_models_1}-\eqref{eq:baselines_models_2}. In each of these experiments, we use an even roll-out with a $10\%$ per period increment. 

In Figure \ref{fig:sim_2orderneighbors}, all procedures tend to perform better in estimation error when adding the second order neighbors interference term. This is likely because outcomes are now more correlated with the underlying network since interference now has a larger spillover effect. On the other hand, all model selection procedures tend to do worse when considering unit-fixed and time-fixed effects. This is unsurprising since the additional individual and time-varying terms make it difficult to distinguish between spillover effects and individual and temporal heterogeneity. In Figure \ref{fig:sim_FETE}, we observe both \textsc{LOPO} and \textsc{Pooled $\mathcal{K}$-Fold} perform better in terms of estimation error relative to the other baselines.

\begin{figure}[t]
    \centering
    \includegraphics[width=0.7\textwidth]{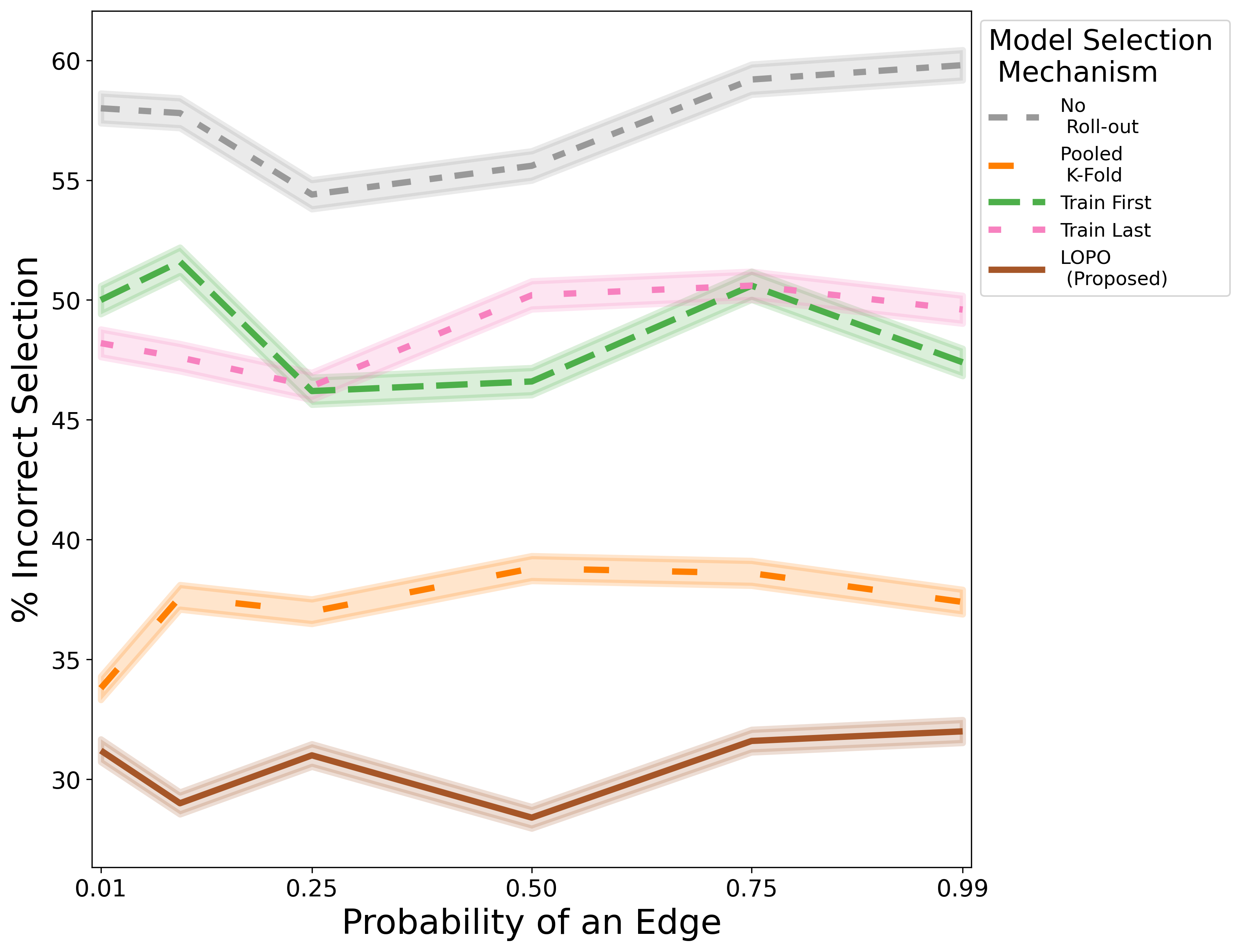}
    \caption{Varying Network Sparsity: Figures are generated by averaging the results of 500 experiments. Each model is estimated with a sample size of $N=1000$ and $T=5$ periods. 95\% bootstrapped confidence sets are displayed by the shaded region. We exclude probabilities of $0$ and $1$ due to multicollinearity issues.}
    \label{fig:sim_sparse}
\end{figure}

\paragraph{Network sparsity} Finally, we consider the performance of our procedure when we vary the sparsity of the underlying network. As we have seen in Section~\ref{sec:Identification}, sparsity can greatly influence the likelihood that the $\TTE$ is identified. Naturally, we expect this parameter to also influence model selection. For example, we might expect graphs that are very sparse to generate little variation preventing us from learning how to extrapolate treatment exposures to outcomes. On the other hand, graphs that are very dense tend to generate colinear data, which complicates parameter estimation. 

In Figure \ref{fig:sim_sparse}, we generate a series of Erdos-Renyi graphs with an increasing probability of any two units having an edge, using each graph to define $\mathcal{G}_1(\cdot)$. Figure \ref{fig:sim_sparse} displays how often each model selection procedure fails to pick out the true model as we increase this probability. We observe that the procedures are not very sensitive to the underlying sparsity of the graph,  with relatively flat selection rates across each graph. Strikingly, the \textsc{LOPO} procedure outperforms every other model selection procedure in selecting the true model at every point.

\paragraph{Polynomial Models} In the previous experiments we have generated potential outcomes based on linearly additive models. However, the \textsc{LOPO} procedure can be extended to non-linear outcome models as well. We illustrate this using the polynomial outcome model of~\citet{yuetal2022rollout}, which assumes the data-generating process
\begin{equation}\label{eq:PI_simulation}
    Y_i(\vec{z}) = c_{i, \emptyset} + \sum_{j \in \mathcal{N}(i)}\tilde{c}_{i,j} z_{j} + \sum_{l=2}^\beta \left(\frac{\sum_{j \in \mathcal{N}(i)}\tilde{c}_{i,j} z_{j}}{\sum_{j \in \mathcal{N}(i)}\tilde{c}_{i,j}} \right)^l,
\end{equation}
where $c_{i,\emptyset} \in U[0,1]$, $\tilde{c}_{i,i} \sim U[0,1]$, and for $i\neq j$ $\tilde{c}_{i,j} = v_j |\mathcal{N}(i)| / \sum_{k : (k,j)\in E}|\mathcal{N}(k)|$ and $v_j \sim U[0,r]$. Here, $r$ is a parameter that controls the magnitude of indirect effects; in our simulations, we set $r=2$. Like \cite{yuetal2022rollout} we do not include a noise term so that any error is due to the misspecification of the model. We define $\mathcal{N}(i)$ according to a sparse Erdos-Renyi graph where the probability of an edge between any two nodes is $0.1$.

An important consideration in \cite{yuetal2022rollout} is the choice of $\beta$ controlling the number of higher-order polynomial terms. \cite{yuetal2022rollout} take a design based perspective where they choose $T=\beta$ periods of roll-out based on a known $\beta$. Instead, we consider how a researcher might select $\beta$ given a $T$-period roll-out. This complementary perspective is important in many online platforms where roll-outs are frequently implemented independent of the model specification; researchers are often tasked with evaluating the effects of an intervention ex-post. Figure \ref{fig:select_beta} shows the results of an experiment where we select $\beta$ by applying $LOPO$ to four variations of the interference model \eqref{eq:true_model} including a model with no interference terms, a second-order term, and a third-order term. $\beta$ can then be inferred by inspecting how many higher-order terms are in the selected model. 
\begin{figure}[ht]
    \centering
        \subfigure[Selecting the Polynomial Order]{\includegraphics[width=0.49\textwidth]{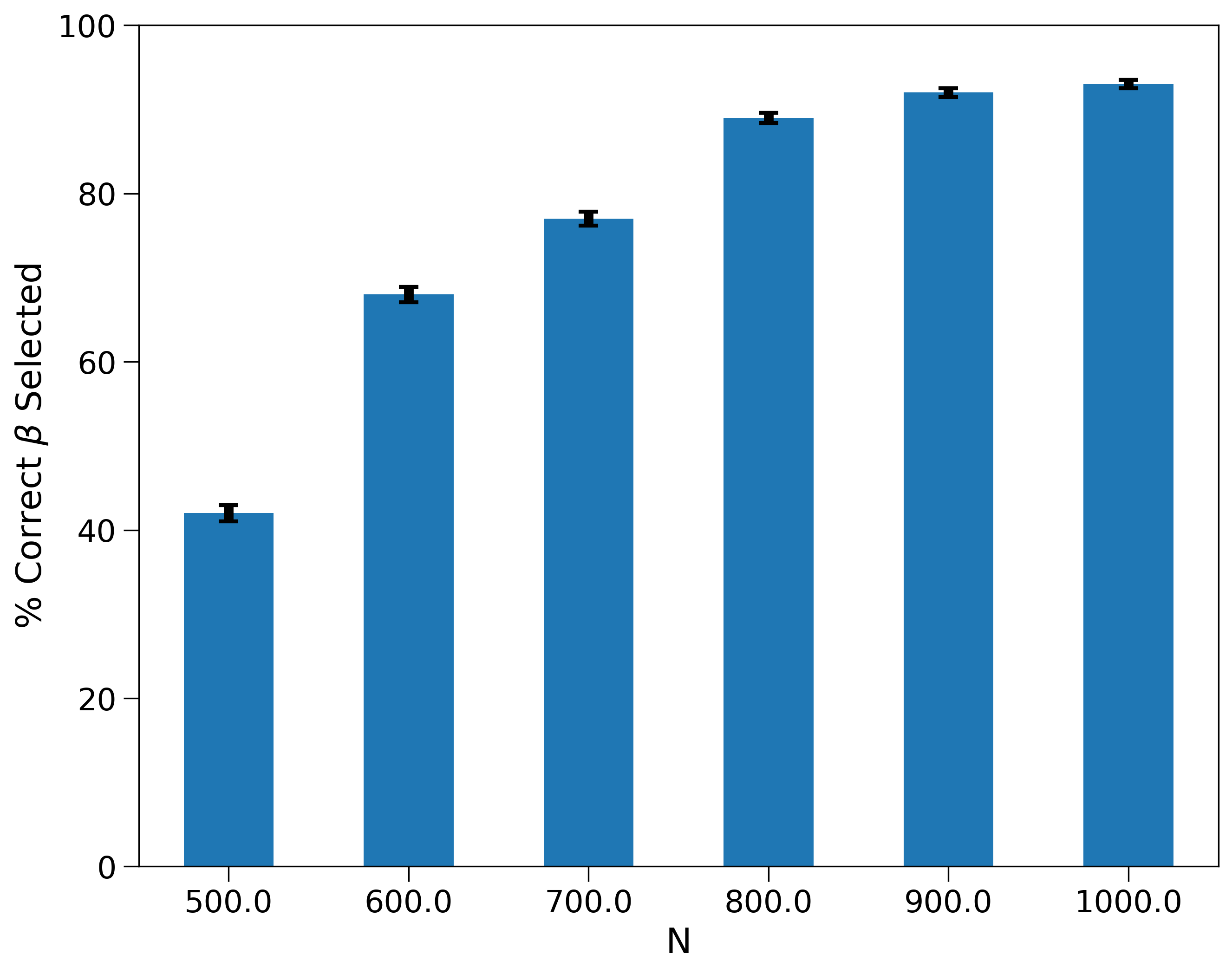}}
    \subfigure[Relative Bias]{\includegraphics[width=0.49\textwidth]{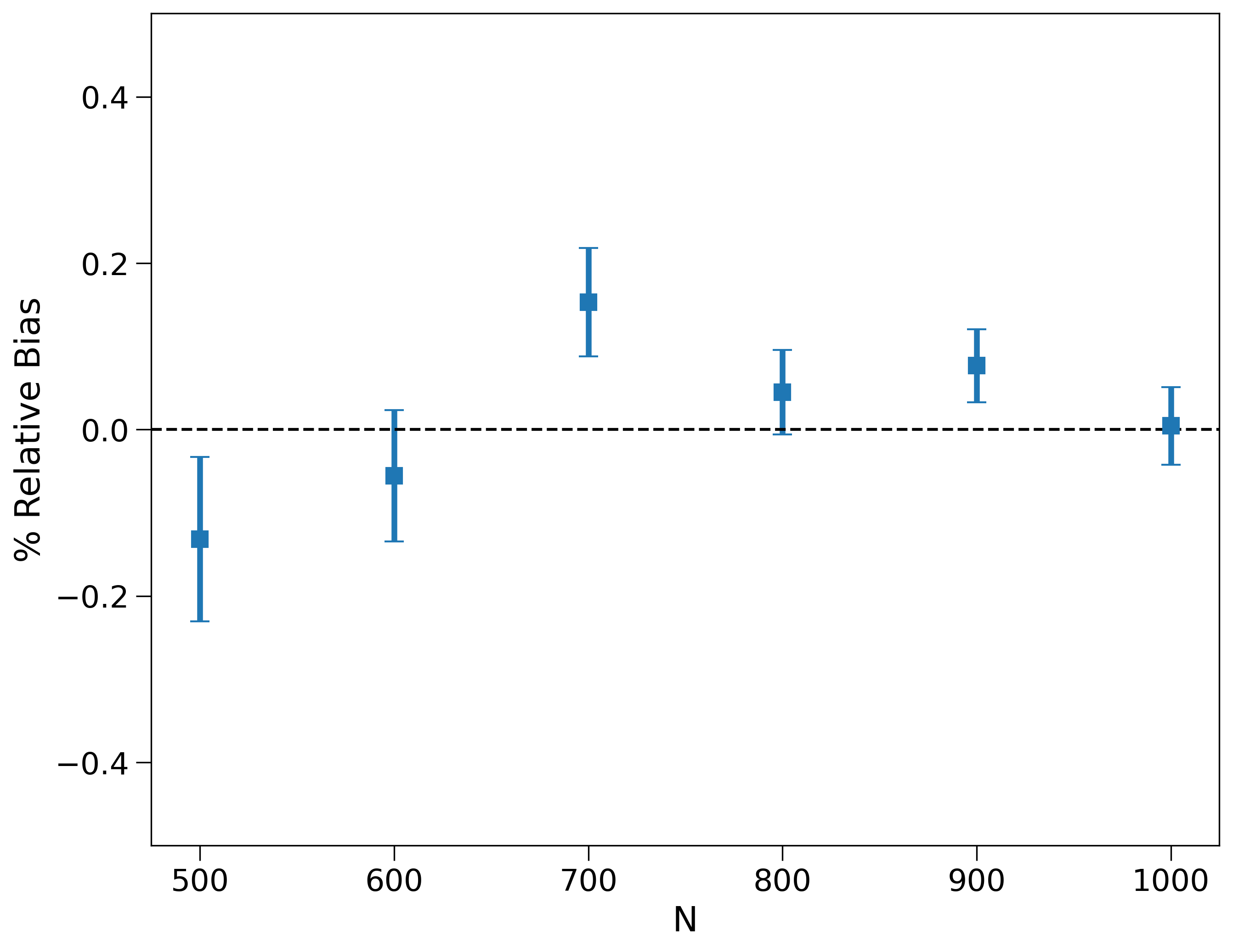}}
    \caption{Selecting the Degree of Polynomial Models: figures are generated by averaging the results of 100 experiments. In each experiment, sample size is given by the x-axis, $N$, with $T=4$ and an even roll-out. We display bootstrapped $95\%$ confidence intervals of the relative bias. We use the DGP given by \cite{yuetal2022rollout} in Eq. \eqref{eq:PI_simulation} with $r=2$ and $\beta=2$.}
\label{fig:select_beta}
\end{figure}

Notably, \textsc{LOPO} predominantly selects the correct model associated with $\beta=2$, and the rate of accurate selections rises with growing values of $N$. After determining the appropriate $\beta$ value, researchers can employ the Lagrange interpolation estimation technique with $T=\beta$, as elaborated in Section 3 of \cite{yuetal2022rollout}. The results of this procedure are displayed in the second panel of Figure \ref{fig:select_beta} where we find that applying this procedure yields minimal relative bias which is vanishing with the sample size $N$. Additionally, in Section \ref{sec:misspec}, we delve into the impact of model misspecification on the bias of the TTE in the polynomial model.

\subsection{Limitations and Extensions}
\label{subsec:limitations}
\textsc{LOPO} tends to lead to better model selections and lower absolute error than other methods across each experimental setting we have considered, especially relative to the \textsc{No Roll-out} procedure. Our empirical findings align with our theoretical study to come, which quantifies the statistical efficiency gains due to roll-outs and confirms our conjecture that temporal variations can be used to better model interference effects. Interestingly, the \textsc{Pooled $\mathcal{K}$-Fold} procedure performs very well, even though it does not consider the underlying network structure. One explanation may be that our interference networks induce neighborhoods that are relatively uniform across units. In this  case, $\mathcal{K}$-Fold cross-validation does not risk leaving out highly central units; uniformity effectively implies  the particular network we used to generate outcomes satisfies exchangeability. 

While the LOPO procedure is reliable overall, there are instances in which it is outperformed by other procedures. As with any model selection procedure, we can always construct adversarial examples in which \textsc{LOPO} will fail. For example, if all the variation in treatment exposure occurs in the first and last periods, we would expect a procedure that only considers estimation and prediction on these periods to do very well while LOPO---which equally weights the middle period----could perform worse. A more challenging example might be the case of threshold interference. In all of our models, we have considered roll-outs only to $50\%$. If there is a thresholding effect whereby interference only appears at the $51\%$ treated level, then all of our procedures would fail since we would have no variation in the interference effect to learn from.

There are possible extensions to our suggested procedure. For instance, we could consider a weighted average of the MSPEs from each fold. Weights could be proportional to the number of treated observations in the test set, for instance, so that variation in each period is considered. Another approach might be to consider testing on multiple periods at once, e.g., leaving out all combinations of two periods, training on the remaining periods, and testing on these two periods. Such a procedure would maximize the variation we exploit and is computationally very costly for potentially marginal benefit. Still, for any of these extensions, an adversarial example is possible, and we believe the LOPO procedure is a reasonable choice, being both intuitive and reliable in a wide range of environments.

\label{sec:Identification}
\section{Model Identification and Estimation}
\label{sec:Identification}

In the previous section, we discussed how a model of interference may be selected using roll-out experiments. We now take this selected model as given and  characterize how roll-outs allow us to identify causal effects. We begin by providing conditions under which the total treatment effect~\eqref{eq:TTE} can be identified under the presence of interference. We demonstrate theoretically and empirically that roll-outs help satisfy these identification conditions. Furthermore, we prove that when these conditions are met, roll-outs provide gains in statistical efficiency for estimating the TTE. 

\subsection{Potential Outcomes Model and Estimation Framework}
Causal estimation under interference requires structural assumptions. To ground our study, we consider the following model class and associated estimator. 
\paragraph{Linear additive models} We assume potential outcomes are linearly additive in $z_i$ and a known $p$-dimensional feature
vector ${\bf f}_{i}:\left\{ 0,1\right\} ^{N}\to\mathbb{R}^{p}$
\begin{equation}\label{eq:additive-model}
Y_{i}^t(\vec{z}_t)= \alpha_i\opt    + \gamma_t\opt    + \psi_{g(i,t)}\opt    + \tau\opt    \cdot z_{i,t} + \eta^{\star\top} \textbf{f}_i(\vec{z}_t) + \epsilon_{i,t}.
\end{equation}
The class~\eqref{eq:additive-model} allows flexible modeling of interference effects. It subsumes commonly studied models such as exposure mapping and two-way fixed effect models~\citep{Harshaw2023Exposure}, as well as models from Example \ref{ex:linearmodels}.
\begin{itemize}
\item $\alpha\opt    \in \mathbb{R}^N$ is a unit-fixed effects allowing for individual heterogeneity. 
\item   $\gamma\opt    \in \mathbb{R}^T$ models time-varying trends through period-fixed effects. 
\item  $\psi\opt    \in \mathbb{R}^G$ models further unit-period heterogeneity where $g:[N]\times[T]\to [G]$. 
To make estimation tractable, we assume $G<NT$ is small enough such that we never have more parameters than observations. 
\item $\tau\opt    \in \mathbb{R}$ is the direct treatment effect for unit $i$. 
\item $\eta\opt   \in \mathbb{R}^p$ models indirect treatment effects due to interference. 
\item $\{\epsilon_{i,t}\}_{i\in [N],t \in [T]}$ are idiosyncratic noise, for which we will consider different regimes. 
\end{itemize}

Letting $K=N+T+G+p+1$, define the vector of model parameters
\begin{equation}
\label{eq:parameters}
    \theta\opt    \defeq [\alpha\opt   , \gamma\opt   , \psi\opt   , \tau\opt   , \eta\opt   ]\in\R^K,
\end{equation} 
assuming a normalization where $f_i(\mymathbb{0})=0_p$ for all $i$. Under the data generating process~\eqref{eq:additive-model}, the total treatment effect~\eqref{eq:TTE} can be rewritten as the linear combination
 \begin{equation}\label{eq:tte_linear}
     \TTE = c^\top\theta\opt   ~~~\mbox{where}~~~ c = \left[0_N, 0_T, 0_G, 1, \overline{f(\mymathbb{1})}
    \defeq \frac{1}{N}\sum_{i=1}^N f_i(\mymathbb{1}) \right]\in\R^K.
 \end{equation}

\paragraph{Estimation approach} To estimate the TTE under the above linearly additive model~\eqref{eq:additive-model}, we turn to simple linear regression estimators. We define our matrix of covariates for a single roll-out period as
\begin{equation}
\label{eq:features_period}
X^t = [I_N, \ones^\top_t, \ones^\top_{g(i,t)}, Z^t, f(Z^t)]
\end{equation}
where $I_N$ is the $N\times N$ identity matrix representing the individual effects, $\ones^\top_t \in \mathbb{R}^{N\times T}$ is a matrix indicating if an observation belongs to period $t$, $\ones^\top_{g(i,t)}\in \mathbb{R}^{N\times G}$ indicates if observation $i$ is in cluster $g=1,\dots, G$ at period $t$. 
Letting $X=[X^1, \dots, X^T]^\top$ and recalling the definition of coefficients $c$ and parameters $\theta$ defined in Eq.~\eqref{eq:parameters},  we study the linear regression estimator 
\begin{equation}
\label{eq:tte_estimator}
\hat\TTE = c^\top \what{\theta} 
~~~\mbox{where}~~~\what{\theta} = (X^\top X)^{-1}X^\top Y
\end{equation}
whenever $X^\top X$ is non-singular, in which case $\hat\TTE$ is a consistent and unbiased estimator. A major benefit of roll-outs is that they increase the likelihood that $X^\top X$ will be non-singular, which is necessary for the total treatment effect to be identifiable and for $\hat\TTE$ to have the aforementioned statistical guarantees.

\subsection{Model Based Identification}

We now consider conditions that allow us to identify the total treatment effect in the finite population setting and show how they are tied to $X^\top X$ being invertible. We first consider the case of a single interference term to build intuition, and then provide a more general condition for the identification of the TTE. We conclude by showing empirically how roll-outs increase the likelihood that these identification conditions hold in-sample.

In what follows we consider parametric identification of the TTE in the setting of finite populations. We say that a parameter is identified in this sense if there exists a consistent estimator for that parameter where consistency is considered with respect to increasing population sizes. Results of this nature are derived in Section \ref{subsec:est}. In the case of linear models with exogenous covariates, as in our setting, a sufficient condition for identification of the parameter vector is non-singularity of the design matrix, $X^\top X$ (see Section 4.2.1 of \cite{Wooldridge2023CrossPanel}). Section \ref{subsec:param_id} provides sufficient conditions for the non-singularity condition to hold.

\subsubsection{Identification with a Single Interference Term}
\label{subsec:param_id}

Proposition \ref{prop:neighbor_spillovers} below introduces a sufficient condition that ensure $X^\top X$ is invertible when $f_i(\vec{z})\in\mathbb{R}$. We state the condition in the language of interference networks to illustrate how they relate to roll-outs and interference. The key idea here is that if we can find some unit in the control group connected to a treated unit and observe the spillover effect on this individual, then we have enough information about the interference mechanism to extrapolate to the case of total treatment. Roll-outs increase the probability that this sufficient condition holds by increasing the proportion of treated individuals in each period. The proof of this result is given in Section \ref{subsec:pf_prop_spill}.
\begin{proposition}\label{prop:neighbor_spillovers}  Consider a $T>1$ period roll-out under the following linearly additive model
$$Y_i^t(\vec{z})= \alpha\opt    + \tau\opt    \cdot z_i + \eta\opt    \cdot f_i(\vec{z}) + \epsilon_{i,t}.$$
Assume there are $t,t' \in [T]$, $i,j\in[N]$,  $(i, t)\neq (j, t')$, such that $Z^t_i = Z^{t'}_j=0$, ${f}_i({Z}^t)\neq 0$, and ${f}_i(Z^t)\neq {f}_j(Z^{t'})$.
Then $X^\top X$, as defined in Eq.~\eqref{eq:features_period}, is non-singular.
\end{proposition}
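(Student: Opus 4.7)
The plan is to reduce the claim to linear independence of the three columns of $X$ for this simplified model---the constant column (associated with $\alpha\opt$), the treatment column $Z$, and the interference column $f(Z)$---as vectors in $\mathbb{R}^{NT}$, since $X^\top X$ is non-singular if and only if $X$ has full column rank. I would therefore suppose there exist scalars $(a,b,c)$ with
\begin{equation*}
a + b\, Z_i^t + c\, f_i(Z^t) = 0 \quad \text{for every } (i,t) \in [N] \times [T]
\end{equation*}
and deduce $(a,b,c) = 0$.

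My first step is to instantiate this identity at the two distinguished observations $(i,t)$ and $(j,t')$ supplied by the hypothesis. Because $Z_i^t = Z_j^{t'} = 0$, the $b$-contributions vanish and the system collapses to $a + c\,f_i(Z^t) = 0$ and $a + c\,f_j(Z^{t'}) = 0$. Subtracting gives $c\bigl(f_i(Z^t) - f_j(Z^{t'})\bigr) = 0$, and the strict inequality $f_i(Z^t) \neq f_j(Z^{t'})$ forces $c = 0$; combined with $f_i(Z^t) \neq 0$, this also forces $a = 0$. To pin down $b$ I would then evaluate the identity at any treated observation $(k,s)$, whose existence is guaranteed by a nontrivial $T$-period roll-out (by Definition~\ref{def:CRD}, at least $\lfloor N \sum_t p_t\rfloor \geq 1$ units are treated by the final period), which immediately yields $b = 0$ and completes the argument.

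\paragraph{Main obstacle.}
The substantive content is packed into the first step, where both parts of the hypothesis turn out to be load-bearing: the inequality $f_i(Z^t) \neq f_j(Z^{t'})$ separates $c$ from the intercept, while the non-vanishing $f_i(Z^t) \neq 0$ separates the intercept from the zero vector. The conceptual point worth emphasizing in the write-up is that in a static single-period design, all untreated units often share identical $f$-values, so the subtraction step collapses and the argument breaks; the across-period variation introduced by the roll-out is precisely what generates the required pair $(i,t),(j,t')$ with differing exposures, and this is where the roll-out assumption earns its keep relative to a one-shot design.
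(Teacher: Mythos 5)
Your proof is correct and follows essentially the same route as the paper's: a linear-dependence argument on the three columns of $X$, using the two untreated observations with distinct exposure values to force the interference coefficient and intercept to zero, and a treated observation (guaranteed by the roll-out) to force the treatment coefficient to zero. One small quibble: once $c=0$, the equation $a + c\,f_i(Z^t)=0$ already gives $a=0$, so the hypothesis $f_i(Z^t)\neq 0$ is not actually load-bearing at that step (it is likewise redundant in the paper's own derivation, which notes it mainly for interpretation as an observed spillover on an untreated unit).
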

\noindent The  condition ${f}_i(Z^t)\neq {f}_j(Z^{t'})$ ensures we observe variation in the interference term so that, as the roll-out progresses, the interference effects vary. The condition ${f}_i({Z}^t)\neq 0$ further ensures that we observe a spillover effect on an untreated unit. Together, these conditions are sufficient (but not necessary) for identifying the TTE by ensuring the invertibility of the Gram matrix, $X^\top X$. 

Next, we apply Proposition~\ref{prop:neighbor_spillovers} in the context of the interference graph from Example~\ref{ex:linearmodels}.
\begin{corollary}
\label{cor:spillover_identification} 
Consider the model from Proposition~\ref{prop:neighbor_spillovers} where the interference term is given by the model \eqref{eqn:N1-model} in Example~\ref{ex:linearmodels}, i.e., $f(\vec{z}) = \sum_{j \in \mathcal{G}_1(i)} \vec{z}_{j}$. Assume that neighbors are commutative so that $i \in \mathcal{G}_1(j)$ implies $j \in \mathcal{G}_1(i)$.
If at time $t>1$, there is a treated unit $j$ with an untreated neighbor ($i \in \mathcal{G}_1(j)$ with $Z_i^t=0$),  $X^\top X$ is non-singular.
\end{corollary}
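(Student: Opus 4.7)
My plan is to verify the three hypotheses of Proposition~\ref{prop:neighbor_spillovers} for the interference feature $f_i(\vec{z}) = \sum_{k \in \mc{G}_1(i)} z_k$ by exhibiting two observation pairs. The first will be the pair supplied by the Corollary's hypothesis; the second will be constructed by exploiting roll-out monotonicity.

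For the first observation $(i, t)$, commutativity of the neighborhood relation gives $j \in \mc{G}_1(i)$, and combining this with $Z_j^t = 1$ yields $f_i(Z^t) = \sum_{k \in \mc{G}_1(i)} Z_k^t \geq 1$. This verifies the conditions $Z_i^t = 0$ and $f_i(Z^t) \neq 0$. For the second observation, let $t^\star = \min\{s \in [T] : Z_j^s = 1\}$ denote the first period in which $j$ becomes treated, and take the pair $(i, t^\star - 1)$. Roll-out monotonicity implies $Z_i^{t^\star - 1} = 0$ (since $Z_i^t = 0$ and $t^\star - 1 < t$). Because $j$ is untreated at $t^\star - 1$ but treated at $t^\star$ with $j \in \mc{G}_1(i)$, weak monotonicity of neighbor counts yields $f_i(Z^{t^\star - 1}) \leq f_i(Z^{t^\star}) - 1 \leq f_i(Z^t) - 1 < f_i(Z^t)$. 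All three hypotheses of Proposition~\ref{prop:neighbor_spillovers} are then satisfied, so $X^\top X$ is non-singular.

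The principal obstacle is the boundary case $t^\star = 1$, where $j$ is treated from the very first period and $(i, 0)$ is not a valid observation. I anticipate this can be handled by looking for any other neighbor of $i$ that is newly treated at some period $s \in (1, t]$ and using $(i, s-1)$ as the second witness, or, in the fully degenerate case where $f_i$ remains constant across all periods of the roll-out, by using an untreated unit other than $i$ at period $1$ whose neighbor count in $Z^1$ differs from $f_i(Z^t)$. Such a witness should be guaranteed by the progressive increase in the treated fraction required by a non-degenerate roll-out with $T > 1$.
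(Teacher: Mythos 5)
Your reduction to Proposition~\ref{prop:neighbor_spillovers} is the same strategy the paper uses, and your verification of the first witness $(i,t)$ --- commutativity plus $Z_j^t=1$ giving $f_i(Z^t)\geq 1$ --- matches the paper's argument exactly. Where you differ is in the second witness: the paper simply asserts that the hypothesis of the proposition is satisfied ``for $(i,t)$ and $(i,1)$,'' which tacitly requires $f_i(Z^1)\neq f_i(Z^t)$, i.e., that some neighbor of $i$ is newly treated strictly after period $1$; this is never checked there. Your choice $(i,t^\star-1)$, with $t^\star$ the first period at which $j$ is treated, verifies exactly that inequality via roll-out monotonicity, so in the case $t^\star\geq 2$ your argument is complete and in fact more careful than the paper's one-line proof.

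The loose end is the boundary case $t^\star=1$, which you name but do not close, and your claim that a fallback witness ``should be guaranteed by the progressive increase in the treated fraction'' is not true in general. If $j$ is treated from period $1$ and no other neighbor of $i$ is ever newly treated, then $f_i$ is constant across periods; one can then build configurations --- e.g., a complete graph in which all treatment occurs at period $1$ and nothing new is added afterward, which Definition~\ref{def:CRD} permits --- where $f_i(Z^t)=m-Z_i^t$ for the fixed treated count $m$, so the interference column of $X$ is an affine combination of the intercept and treatment columns and $X^\top X$ is genuinely singular even though the corollary's stated hypothesis holds. The statement therefore needs an implicit non-degeneracy assumption, most naturally that period $1$ is a pure-control baseline (as in Example~\ref{ex:tte_identified_theta_not}), which forces $f_i(Z^1)=0\neq f_i(Z^t)$ and closes both your proof and the paper's. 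Since the paper's own proof carries the identical gap, I would not count this against you beyond noting that your sketch for the fully degenerate case cannot be made to work as stated.
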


\subsubsection{Identification with General Interference}

In practice, we are only interested in estimating the TTE, not the entire parameter vector $\theta$ in the model~\eqref{eq:additive-model}. Recalling the linear representation~\eqref{eq:tte_linear} for the TTE,
we now show we can identify the TTE even when the individual components of $\theta$ are not identifiable.
Our result shows that the TTE is identified so long as  the linear transformation $c$ that maps $\theta$ to the TTE~\eqref{eq:tte_linear} lies in the space spanned by the covariates~\eqref{eq:features_period}.
Intuitively, this shows that we can identify the TTE under general interference patterns whenever
 the linear transformation~\eqref{eq:tte_linear} can be represented by the observed data. 
This is particularly useful in the small $N$ and $T$ regime where there may not be enough variation to compute a typical least squares estimate. 
\begin{theorem}\label{thm:uniqueness} Under the data-generating model~\eqref{eq:additive-model}, recall the linear transformation  $c$ that maps the vector of parameters to the TTE~\eqref{eq:parameters}. If $c\in\text{span}(X^{\top})$, then $\{c^\top\theta:X^{\top}X\theta=X^{\top}Y\}$ is a singleton.
\end{theorem}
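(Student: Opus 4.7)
The plan is to show that although the normal equations $X^\top X \theta = X^\top Y$ may admit many solutions when $X$ is rank-deficient, the linear functional $\theta \mapsto c^\top \theta$ is constant on the solution set whenever $c$ lies in the row space of $X$. This is the classical estimability criterion for linear models, and the argument reduces to two standard facts: (i) $\ker(X^\top X) = \ker(X)$, and (ii) $\mathrm{range}(X^\top) = \ker(X)^\perp$.

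Concretely, I would proceed as follows. First, pick any two solutions $\theta_1, \theta_2$ to $X^\top X \theta = X^\top Y$ and set $\delta = \theta_1 - \theta_2$, so that $X^\top X \delta = 0$. Multiplying on the left by $\delta^\top$ gives $\|X\delta\|^2 = 0$, hence $\delta \in \ker(X)$. Next, use the hypothesis $c \in \mathrm{span}(X^\top) = \mathrm{range}(X^\top)$ to write $c = X^\top v$ for some vector $v$. Then
\begin{equation*}
c^\top \delta \;=\; v^\top X \delta \;=\; 0,
\end{equation*}
so $c^\top \theta_1 = c^\top \theta_2$. Since $\theta_1, \theta_2$ were arbitrary, the set $\{c^\top \theta : X^\top X \theta = X^\top Y\}$ consists of a single value.

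To make the connection to the statement of the theorem explicit, I would also note that under the model~\eqref{eq:additive-model}, the TTE admits the representation $c^\top \theta^\star$ from~\eqref{eq:tte_linear}, and that the normal equations $X^\top X \theta = X^\top Y$ always have at least one solution (since $X^\top Y \in \mathrm{range}(X^\top) = \mathrm{range}(X^\top X)$), so the singleton in question is well-defined and nonempty. There is no serious technical obstacle here; the only subtlety is recognizing that identifiability of the scalar functional $c^\top \theta$ is strictly weaker than identifiability of the full parameter vector $\theta$, which is precisely the point of the theorem: when $X$ has deficient column rank (as will often be the case for small $N, T$), the components $\alpha, \gamma, \psi, \tau, \eta$ may not be individually identified, yet the particular linear combination defining the TTE still is, provided $c$ lies in the span of the observed feature directions.
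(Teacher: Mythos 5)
Your proof is correct, and it takes a genuinely different route from the paper's. The paper proves the result via linear programming duality: it sets up the maximization and minimization of $c^\top\theta$ over the solution set $\{\theta : X^\top X\theta = X^\top Y\}$, forms the dual programs (feasible precisely when $c\in\mathrm{span}(X^\top X)=\mathrm{span}(X^\top)$), and invokes strong duality together with the symmetry of $X^\top X$ to show the upper and lower bounds coincide. You instead give the classical estimability argument: two solutions of the normal equations differ by $\delta$ with $X^\top X\delta=0$, hence $\|X\delta\|^2=\delta^\top X^\top X\delta=0$ and $\delta\in\ker(X)$; writing $c=X^\top v$ then gives $c^\top\delta=v^\top X\delta=0$. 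Your argument is more elementary and self-contained, and it has the added virtue of explicitly noting that the solution set is nonempty (since $X^\top Y\in\mathrm{range}(X^\top)=\mathrm{range}(X^\top X)$), a point the paper's proof takes for granted. What the paper's duality framing buys in exchange is a natural extension to the case $c\notin\mathrm{span}(X^\top)$, where the same primal pair of optimization problems yields upper and lower bounds on the estimand rather than point identification; your kernel argument establishes the singleton claim but does not by itself produce that partial-identification interpretation. For the theorem as stated, either proof is complete.
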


To clarify the importance of Theorem \ref{thm:uniqueness}, which we prove in Section \ref{subsec:pf_thm_id}, we consider the following example where $X^\top X$ is singular but the TTE is still well-defined and identifiable via Theorem \ref{thm:uniqueness}.
\vspace{-12pt}
\begin{figure}[H]
\centering

\includegraphics[width=0.5\textwidth]{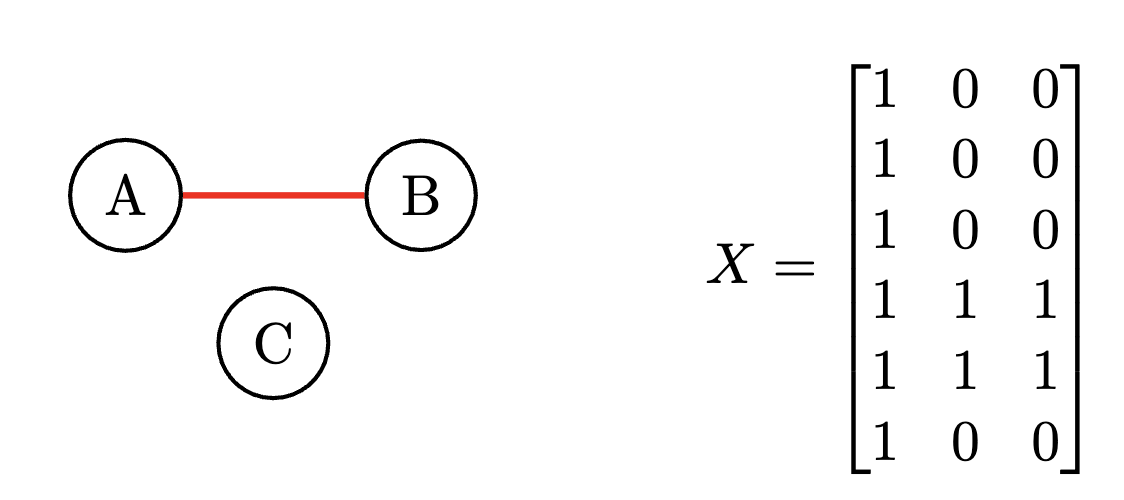}
\caption{A disconnected network defined by $\mathcal{G}(\cdot)$ and corresponding feature matrix $X$ defined by \eqref{eq:ex2}.}\label{fig:ex2_network}
\end{figure}

\begin{example}[Identifying the TTE when $\theta$ is \emph{not} identified]\label{ex:tte_identified_theta_not} Consider the linear interference model of Example \ref{ex:linearmodels} with no individual heterogeneity, so that $\forall~ i \in [N],~\alpha_i = \alpha$,
\begin{equation}\label{eq:ex2}
 Y_i^t(Z^t) = \alpha\opt    + \tau\opt   \cdot Z_{i}^t
 + \eta\opt   _1\cdot \sum_{j \in \mathcal{G}(i)} Z_{j}^t + \epsilon_{i,t}
\end{equation}
Let $N=3$ and define the interference network $\mathcal{G}(\cdot)$ by the graph in Figure~\ref{fig:ex2_network}. Suppose that $T=2$, such that in the first period $t=1$ no individuals are treated, and that in period $t=2$ we treat observations $A$ and $B$, generating the feature matrix $X$ in Figure \ref{fig:ex2_network}. 
We wish to estimate the TTE using the correctly specified model~\eqref{eq:ex2}. Notice that $X^\top X$ is singular since because columns 2 and 3 of $X$ are linearly dependent. 
The TTE in this model is given by $\tau + \eta$ implying $c = [0,1,1]$, and $c \in \text{span}(X^\top)$ because $X^\top v  = c$ for $v=[0,0,0,0,1,-1]^\top \in \R^{6}$.  Theorem~\ref{thm:uniqueness}  shows that the TTE is identifiable; in particular, we can estimate TTE by looking at the difference in outcomes for unit $A$ or $B$ at periods $t=1$ and $t=2$.
\end{example}

Proposition \ref{prop:neighbor_spillovers} and Theorem \ref{thm:uniqueness} consider when it is possible to identify a linear combination of parameters from a linear regression. While it is possible to satisfy the conditions of Proposition \ref{prop:neighbor_spillovers} and Theorem \ref{thm:uniqueness} with variation in interference effects over a single period---sometimes called spatial variation~\citep{aronow2017estimating}---in many cases, we also need temporal variation to achieve identification, which roll-outs provide.  For example, when we have individual heterogeneity parameters, $\{\alpha_i\}_{i\in[N]}$, temporal variation in individual responses is required for identification. 
\begin{figure}
\centering
    \includegraphics[width=0.5\textwidth]{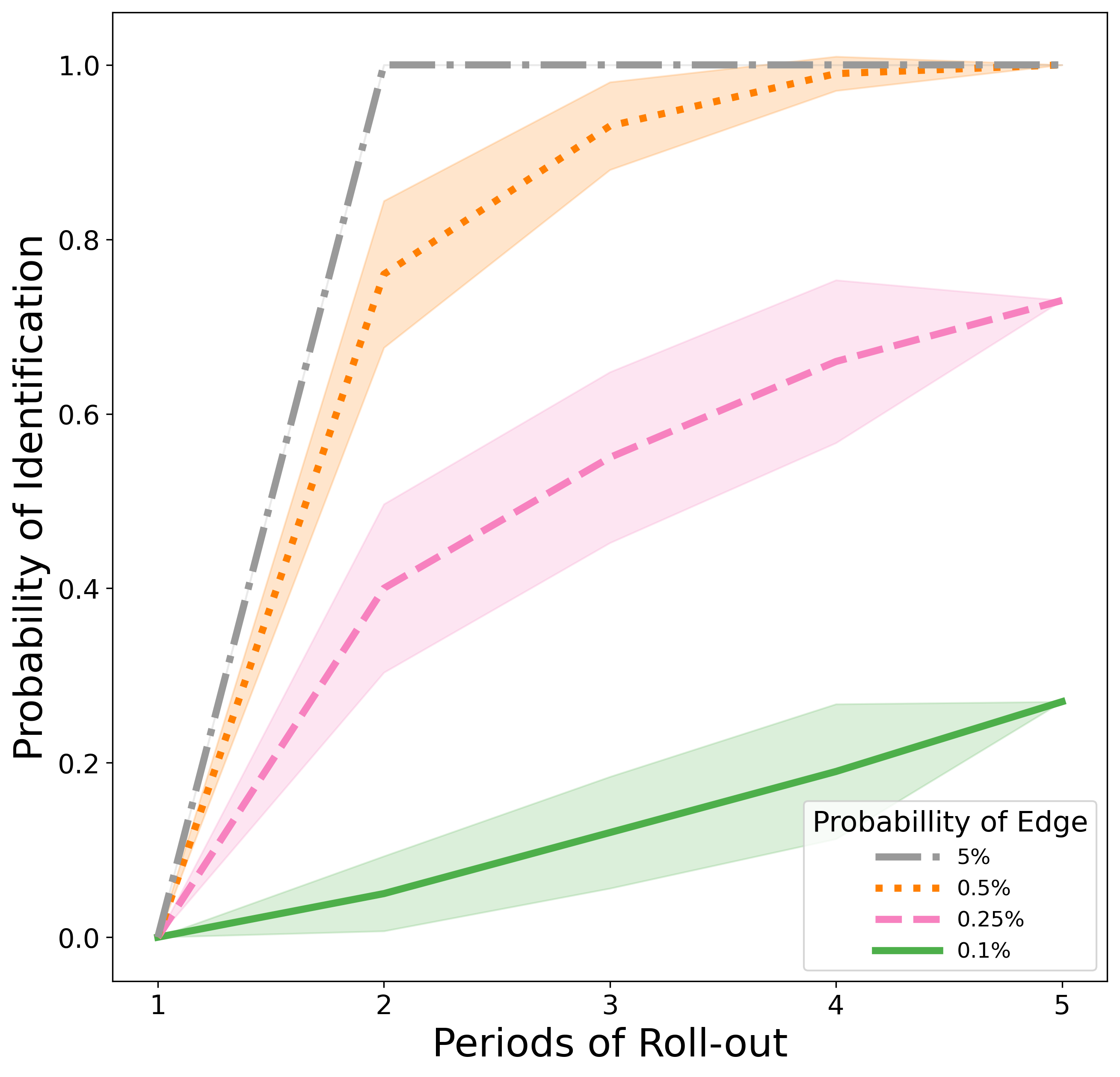}
    \caption{Probability that $c\in \text{span}(X^\top)$ for an underlying Erdos-Renyi graph with varying edge-formation probabilities of an edge. Probabilities are computed over 100 experiments. We display 95\% CIs. Over all realizations with a 5\% probability of an edge (shown in grey) results are constant, hence there is no visible CI.}
\label{fig:prob_id}
\end{figure}
Roll-outs should not only help us in identifying individual effects but also enable us to identify interference effects as well as the total treatment effect. Specifically, roll-outs provide added variation that increases the probability that the conditions such as the ones outlined in Proposition \ref{prop:neighbor_spillovers} and Theorem \ref{thm:uniqueness} hold. Figure \ref{fig:prob_id} provides evidence for this idea in a simulated setting where outcomes are sampled according to \eqref{eqn:N1-model}: as the number of roll-out periods increases, the probability of uniquely identifying the total treatment effect increases very quickly, even in extremely sparse models with an Erdos-Renyi parameter of $0.001$. As expected, the higher the graph density, the likelier we are to satisfy the conditions in Proposition \ref{prop:neighbor_spillovers} and Theorem \ref{thm:uniqueness}. This is because increasing network density also increases the probability that any two units are connected and so generally increases the likelihood that an untreated unit will be connected to a treated unit.

\subsection{Estimation of the Total Treatment Effect}
\label{subsec:est}

In addition to guaranteeing identifiability, temporal variation in the covariates $X$ as measured by the spectrum of $X^\top X$ can also reduce statistical error due to measurement noise. In this section, we quantify how roll-outs provide gains in statistical efficiency under two settings. First, we consider the case of completely correlated noise so that, in each period, outcomes only change as a function of the treatment. This is similar to modeling unobserved individual heterogeneity, which can be fully captured using unit-level fixed effects. Second, and at the other extreme, we consider fully independent noise across individuals and roll-out periods. This setting arises when there are no unobserved temporally persistent events. In both settings we expect roll-outs to improve the final variance bound. 

We begin by studying the bias and variance of our estimator under the assumption that our model selection procedure (cf. Section~\ref{sec:Model Selection}) has correctly chosen an interference model. The unbiasedness of the estimator is clear from the randomized design because $Z$ is drawn randomly in each period independent of $\epsilon$. To study the variance of our estimator, we make the following assumption on how unobserved noise enters the model.

\begin{assumption}[Time-invariant Individual Idiosyncrasies]\label{assum:fixed} 
Suppose $\epsilon_{i,t}=\epsilon_i$ where $\{\epsilon_{i}\}_{i=1,\dots,N,}$ are independent, mean-zero, and satisfies 
$\E[\epsilon_{i}^2] \leq \sigma_{\max}^2 < \infty$.
\end{assumption}

Applying this assumption, we can control the mean squared error ($\MSE$) of our estimator $\hat\TTE$. Since our estimator $\hat\TTE$ is unbiased, we have $\MSE(\hat\TTE) = \E[(\hat\TTE - \TTE)^2] = \var(\hat\TTE)$ and we only have to control the variance term. 
As noise terms across periods are completely correlated under Assumption~\ref{assum:fixed}, idiosyncrasies may persist indefinitely across periods and we expect our variance will increase as a function of $T$. The variance reduction occurs as the population size $N$ grows large, as in the classical linear regression setting. In the below result,  this is captured by the geometry of $X$ using $\lambda_{\min}(X^\top X)$, the minimum eigenvalue of $X^\top X$. 
\begin{theorem}\label{thm:variance_fixed} 
Under the data-generating model~\eqref{eq:additive-model} and Assumption \ref{assum:fixed}, 
we have 
$ \frac{1}{NT} \E\ltwo{X\what{\theta} - X\theta\opt   }^2 \leq 4\sigma_{\max}^2 \frac{K}{N} $, and 
$$
\E[(\hat\TTE - \TTE)^2] 
= \E (c^\top (\what{\theta} - \theta\opt   ))^2 
 \leq 4KT\cdot\ltwo{c}^2 \cdot \sigma_{\max}^2 \cdot \E\left[\lambda_{\min}(X^\top X)^{-1}\right].
 $$ 
\end{theorem}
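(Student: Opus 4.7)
The plan is to substitute the data-generating model $Y = X\theta\opt + \epsilon$ (where $\epsilon \in \R^{NT}$ is the stacked noise vector indexed by $(i,t)$) into the OLS estimator, yielding $\what{\theta} - \theta\opt = (X^\top X)^{-1} X^\top \epsilon$. The crucial preliminary step is to quantify the covariance $\Sigma_\epsilon \defeq \E[\epsilon \epsilon^\top]$ under Assumption~\ref{assum:fixed}. Because $\epsilon_{i,t} = \epsilon_i$ is perfectly correlated across periods within each unit and independent across units, after grouping observations by unit $\Sigma_\epsilon$ is block-diagonal with $N$ blocks of the form $\sigma_i^2 J_T$, where $J_T$ is the $T \times T$ all-ones matrix. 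Since $J_T$ has eigenvalues $T$ and $0$, this gives the operator-norm bound $\|\Sigma_\epsilon\|_{\mathrm{op}} \leq T\sigma_{\max}^2$; this factor of $T$ is the price paid for idiosyncrasies that persist across the roll-out.

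For the prediction error, I would write $X\what{\theta} - X\theta\opt = H \epsilon$ where $H = X(X^\top X)^{-1}X^\top$ is the orthogonal projection onto $\mathrm{col}(X)$, so $\mathrm{rank}(H) \leq K$. Then, conditional on $X$,
\[
\E\bigl[\ltwo{X(\what{\theta} - \theta\opt)}^2 \mid X\bigr] = \mathrm{tr}(H \Sigma_\epsilon) \leq \mathrm{rank}(H) \cdot \|\Sigma_\epsilon\|_{\mathrm{op}} \leq K T \sigma_{\max}^2.
\]
Dividing by $NT$ and taking the outer expectation over $X$ yields the first claim (with the constant $4$ appearing as slack).

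For the TTE error, I would apply Cauchy--Schwarz to get $(c^\top (\what{\theta} - \theta\opt))^2 \leq \ltwo{c}^2 \cdot \ltwo{\what{\theta} - \theta\opt}^2$, and then control $\ltwo{\what{\theta} - \theta\opt}^2 = \epsilon^\top X (X^\top X)^{-2} X^\top \epsilon$. Using the thin SVD $X = U \Sigma_X V^\top$, a direct calculation gives $X(X^\top X)^{-2} X^\top = U \Sigma_X^{-2} U^\top$, whose operator norm equals $\lambda_{\min}(X^\top X)^{-1}$. Hence
\[
\ltwo{\what{\theta} - \theta\opt}^2 \leq \lambda_{\min}(X^\top X)^{-1} \cdot \epsilon^\top H \epsilon,
\]
and combining this with the conditional trace bound from the prediction step, followed by taking the outer expectation over $X$, yields the second claim.

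The main obstacle is the first step: correctly accounting for the temporal correlation of $\epsilon$ under Assumption~\ref{assum:fixed}. A naive bound like $\|\Sigma_\epsilon\|_{\mathrm{op}} \leq \sigma_{\max}^2$ would be incorrect because the noise is perfectly correlated across the $T$ periods for each unit; one must exploit the $J_T$ block structure to recover the correct $T\sigma_{\max}^2$ scaling. Once the operator-norm bound is in hand, both claims reduce to routine linear-algebra manipulations via the hat matrix $H$ and the SVD of $X$, which explains why the final TTE variance inherits the factor $KT\sigma_{\max}^2$ from the prediction bound and the geometric factor $\E[\lambda_{\min}(X^\top X)^{-1}]$ from the SVD reduction.
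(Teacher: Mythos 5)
Your proof is correct and reaches the stated bounds (in fact with a better constant, since the paper's factor of $4$ is slack in your derivation). The outer skeleton is the same as the paper's: both arguments pass through Cauchy--Schwarz, $\ltwo{\what{\theta}-\theta\opt}^2 \le \lambda_{\min}(X^\top X)^{-1}\ltwo{X(\what{\theta}-\theta\opt)}^2$, and the crucial covariance bound $\lambda_{\max}(\E[\epsilon\epsilon^\top]) \le T\sigma_{\max}^2$ coming from the rank-one (all-ones) block structure induced by $\epsilon_{i,t}=\epsilon_i$ --- you are right that this is where the extra factor of $T$ relative to the i.i.d.\ case enters. Where you diverge is in how the in-sample prediction error $\E\ltwo{X\what{\theta}-X\theta\opt}^2$ is controlled. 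The paper does not use the closed form of $\what{\theta}$: it invokes the least-squares basic inequality $\ltwo{Y-X\what{\theta}}^2 \le \ltwo{Y-X\theta\opt}^2$ to get $\ltwo{X(\what{\theta}-\theta\opt)} \le 2\,\epsilon^\top X(\what{\theta}-\theta\opt)/\ltwo{X(\what{\theta}-\theta\opt)}$, then bounds the right side by $2\sup_{\mu\in B_K}\epsilon^\top\Phi\mu$ for an orthonormal basis $\Phi$ of $\mathrm{col}(X)$ (adapting Rigollet's argument), which is where the $4 = 2^2$ and the factor $K$ appear. You instead write $X(\what{\theta}-\theta\opt) = H\epsilon$ with $H$ the hat matrix and compute $\E[\epsilon^\top H\epsilon \mid X] = \mathrm{tr}(H\Sigma_\epsilon) \le \mathrm{rank}(H)\,\|\Sigma_\epsilon\|_{\mathrm{op}} \le KT\sigma_{\max}^2$; this is more direct, avoids the variational step, and sharpens the constant, at the (immaterial here) cost of requiring $X^\top X$ to be invertible, whereas the basic-inequality route applies to any least-squares minimizer. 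Your SVD identity $X(X^\top X)^{-2}X^\top = U\Sigma_X^{-2}U^\top$ for the TTE step is likewise a clean substitute for the paper's eigenvalue bound. One small thing to make explicit if you write this up: the conditional-covariance step $\E[\epsilon\epsilon^\top\mid X]=\Sigma_\epsilon$ uses that the roll-out assignment $Z$ (hence $X$) is drawn independently of $\epsilon$, which the paper also relies on for unbiasedness.
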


It is useful to compare Theorem~\ref{thm:variance_fixed} to the standard linear regression setting with a single period $T=1$, where a similar analysis yields the same bound but without any dependence on $T$. Comparing these two results together, it may seem as though roll-outs increase the variance in the case of time-invariant noise. However, we also need to consider how $\lambda_{\min}(X^\top X)$ scales with $N$ and $T$. By way of illustration, consider the case of a complete graph so that the interference term can be deterministically quantified in relation to $Z^{t}$. In that setting, we find that the minimum eigenvalue grows linearly in $NT$, implying that we recover the classical $\frac{1}{N}$ rate. The following lemma captures this idea that a roll-out allows us to increase our effective data size even with fixed population size and time-invariant errors. In particular, as we have seen in the previous subsection, roll-outs also tend to increase the likelihood that our $X^\top X$ matrix will be full-rank. 

\begin{lemma}\label{lemm:conv_complete_linear} 
Consider the same model as in Proposition~\ref{prop:neighbor_spillovers} and a completely randomized roll-out (cf. Definition \ref{def:CRD}) with allocation vector $\vec{p}$. Let Assumption~\ref{assum:fixed} hold and let $f_i$ be  linear-in-means  $f_i(\vec{z})=\frac{1}{|\mathcal{G}(i)|}\sum_{j\in \mathcal{G}(i)}z_j$ where $\mathcal{G}(i)$ is defined by a complete graph. Then, fixing the sample size at $N$ there exists $M\in\R$ large enough that for $NT>M$
$$\E[(\hat\TTE - \TTE)^2] \leq \frac{8\bar{C}_1}{N}, $$
where $\bar{C}_1$ is a constant that depends on $K$, $\sigma^2_{\max}$ from Assumption \ref{assum:fixed}, and the allocation vector $\vec{p}$.
\end{lemma}
\noindent Lemma \ref{lemm:conv_complete_linear} illustrates how roll-outs even when observations are fully correlated across periods still enable us to obtain the usual $\frac{1}{N}$ decay for the variance through $\lambda_{\min}(X^T X)$, the minimum eigenvalue of our Gram matrix. The proof can be found in Section \ref{subsec:pf_lemm_comp}.

We now turn to the case where individual noise is independent across periods. As we have noted earlier, this setting arises when unobserved idiosyncrasies do not persist across several periods. Here we make the following analogue to Assumption \ref{assum:fixed}.
\begin{assumption}[Time-varying Individual Idiosyncrasies]\label{assum:iid}
$\{\epsilon_{i,t}\}_{i\in[N],t\in[T]}$ are independent, mean-zero, and satisfies
$\E\left[\epsilon_{i,t}^2\right] \leq \sigma_{\max}^2 <\infty$.
\end{assumption}
Because Assumption~\ref{assum:iid} requires noise to be independent across time periods, we can achieve tighter control of the variance of our estimator. In particular, roll-outs  decrease the $\MSE$ at a $\frac{1}{T}$ rate since idiosyncrasies are fully independent across time. Applying the same analysis as in our derivation of Theorem \ref{thm:variance_fixed}, we have the following result which we prove in Section \ref{subsec:pf_thm_var}.  
\begin{theorem}\label{thm:variance}
   Under the data-generating model \eqref{eq:additive-model} and Assumption \ref{assum:iid}, we have 
    $ \frac{1}{NT} \E\ltwo{X\what{\theta} - X\theta\opt   }^2 \leq 4\sigma_{\max}^2 \frac{K}{NT} $, and  
    $$\E[(\hat\TTE - \TTE)^2] = \E (c^\top (\what{\theta} - \theta\opt   ))^2 \leq  4K \cdot \ltwo{c}^2 \cdot \sigma_{\max}^2 \cdot \mathbb{E}\left[\lambda_{\min}(X^\top X)^{-1}\right]$$
\end{theorem}

Similar to Lemma \ref{lemm:conv_complete_linear}  in the case of Assumption \ref{assum:iid},  the next lemma provides a concrete bound for a complete interference graph. When the noise is fully independent across periods, we  gain a reduction in variance as $T$ grows. Hence, in this extreme, roll-outs produce an even faster vaster variance reduction through the geometry of $\lambda_{\min}(X^\top X)$.
\begin{lemma}\label{lemm:conv_complete_linear_iid} 
Consider the same model as in Proposition~\ref{prop:neighbor_spillovers} and a completely randomized roll-out (cf. Definition \ref{def:CRD}) with allocation vector $\vec{p}$. Let Assumption~\ref{assum:iid} hold and let $f_i$ be  linear-in-means  $f_i(\vec{z})=\frac{1}{|\mathcal{G}(i)|}\sum_{j\in \mathcal{G}(i)}z_j$ where $\mathcal{G}(i)$ is defined by a complete graph. Then, fixing the sample size at $N$ there exists $M\in\R$ large enough that for $NT>M$
$$\E[(\hat\TTE - \TTE)^2] \leq \frac{8\bar{C}_2}{NT}, $$
where $\bar{C}_2$ is a constant that depends on $K$, $\sigma^2_{\max}$ from Assumption \ref{assum:iid}, and the allocation vector $\vec{p}$.
\end{lemma}
\noindent See Section \ref{subsec:pf_lemm_comp} for the proof.

Figure \ref{fig:variance reduction} shows the implications of Theorems \ref{thm:variance_fixed} and \ref{thm:variance} in both the time-varying and time-invariant noise settings. In both cases, we see a non-trivial variance reduction relative to the no roll-out case, which is emblematic of the variance gains from roll-outs quantified in this section.

\begin{figure}
    \centering
        \subfigure[Time-invariant Individual Idiosyncrasies]{\includegraphics[width=0.45\textwidth]{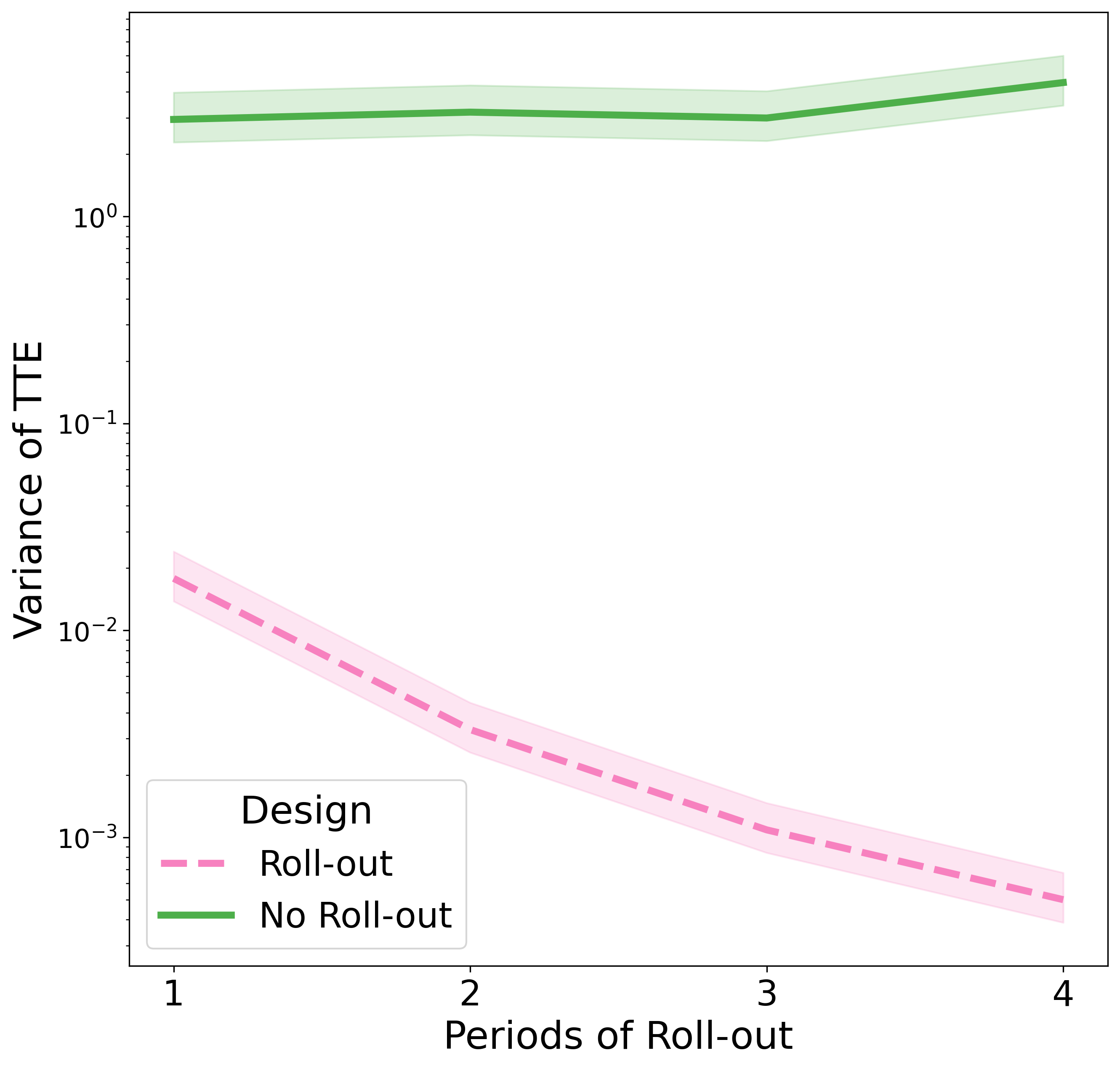}}
        \subfigure[Time-varying Individual Idiosyncrasies]{\includegraphics[width=0.45\textwidth]{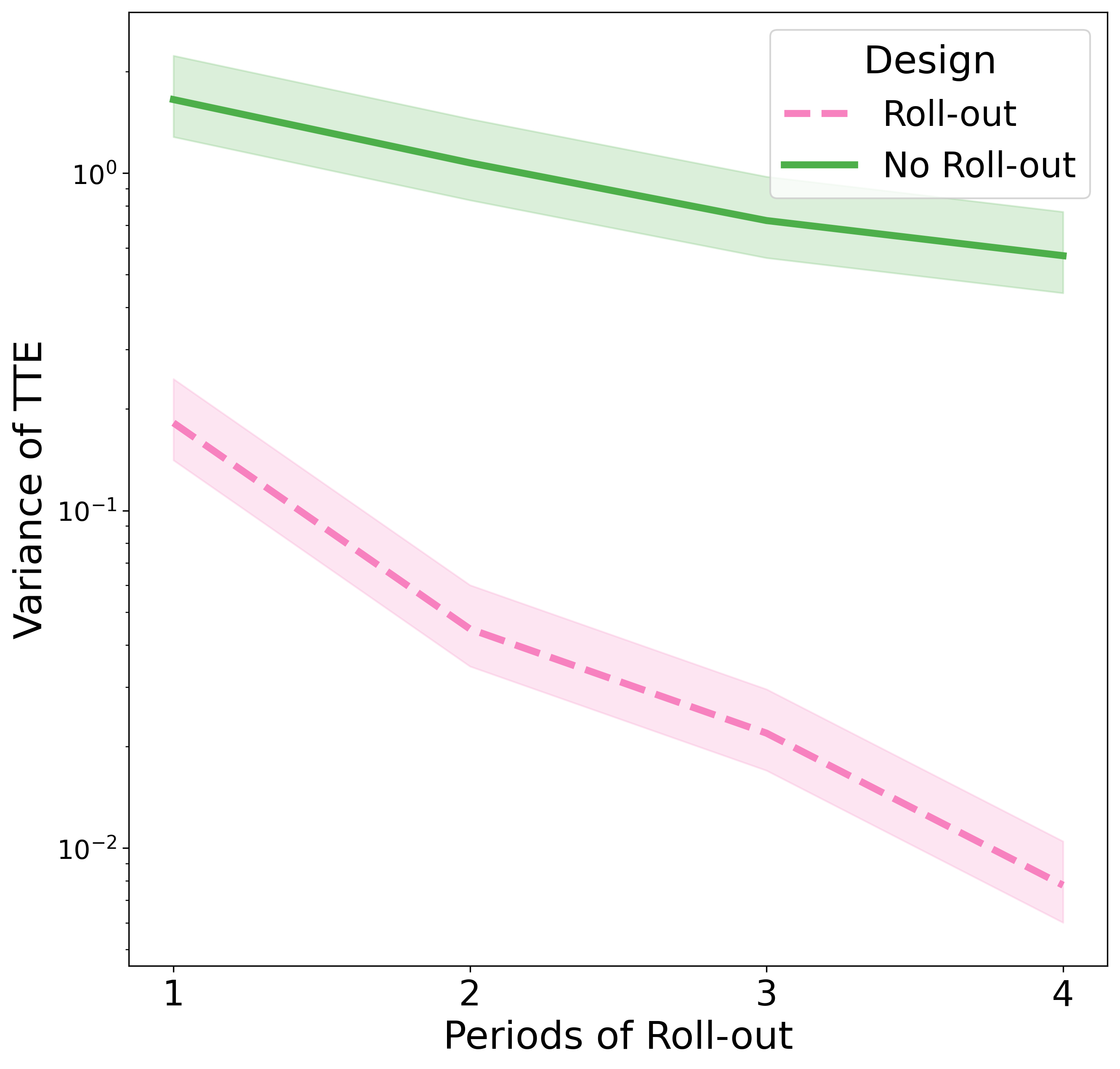}}
    \caption{Variance reduction due to roll-out under fixed and i.i.d. errors. The no roll-out baseline is computed by assuming that we observe a $50\%$ treated population at multiple periods in time with both fixed and random potential outcomes. We display $95\%$ $\chi^2$-confidence intervals.}
\label{fig:variance reduction}
\end{figure}


\section{Discussion}
\label{sec:considerations}
In this work, we leverage a universal experimentation design used throughout online platforms, roll-outs, to model interference effects. We quantify how roll-outs induce temporal variation in treatment exposure that facilitates the identification and estimation of the total treatment effect. We propose a model selection procedure to help practitioners model interference and identify the total treatment effect.  We conclude the paper by discussing robustness checks that can augment our model selection framework, and discuss possible pitfalls practitioners may face when applying our methodology. The heuristics we propose below help practitioners implement our methodology. 

\subsection{Practical Considerations}
\label{sec:considerations}

\paragraph{Robustness checks}
A natural question that arises from this analysis is whether there are any robustness checks that can provide evidence that our model selection procedure has chosen the correct model of interference. Fundamentally, testing if a model is correct is not possible. However, there are tests we can perform to build evidence that we are fully accounting for the variation in outcomes caused by interference in our data sample.

The first recommendation we make to practitioners is to include a model without interference terms in the model selection step. Including such a model in our procedure is equivalent to testing for interference. If the model selection procedure chooses the no interference model, when there is a strong prior for interference in the experiment, then this is good evidence that the models that are being tested are inadequately capturing the effects of treatment exposure.

Another possible test uses the interference testing framework of \citet{Li2022}. Their work considers what gains roll-outs provide when attempting to detect interference. They provide several permutation tests under a Bernoulli roll-out design that are able to effectively test for the presence of interference. A key component of these tests is the candidate exposure of each unit which is defined in their notation as $h_i(W_{-i,k})$ where, using our notation $W_{i,k}=Z_{i}^t$. 
A simple test to consider is to define $h_i$ to be the interference terms in our setting, that is to say, set $h_i(W_{-i,k}) = \textbf{f}_i(Z^t)$ where $\textbf{f}(\cdot)$ is given by the selected model, and then conduct the proposed multiple experiment test of \cite{Li2022}.  If the test finds interference to be statistically significant, then this is good evidence that the selected model of interference is capturing the effect of treatment exposure on outcomes. While this test may still suffer from misspecification issues, comparing its results to the permutation test proposed by~\citet{Li2022} again can provide strong evidence in favor of the selected model.

A final approach applies the test proposed by \citet{pougetabadie2019testing}. In this case, after model selection is completed, we compute the new outcome, which subtracts the effects of interference from each outcome at every period. Next, we pool our data across all periods and use the underlying interference network to create clusters of units, allowing us to compute a difference-in-means estimate of the total treatment effect and a Horvitz-Thompson estimate under a cluster-based design. We can now compute $\Delta$ as the difference of these estimates and conduct the test proposed in \cite{pougetabadie2019testing}. If we find that the estimates are similar, then this is again evidence that our selected model is accurately capturing interference effects.

\paragraph{Other Considerations}


Since effect sizes are typically small in online platforms, the lack of statistical power may result in the inability to distinguish between similar potential outcome models. In many of the simulations we considered, we observed different models yield similar estimates of the TTE, somewhat alleviating such concerns. When considering rich interference models, we recommend a LASSO penalty when conducting estimation. 

There are often non-stationarities in interference effects that require time to equilibrate, and the length of each period in a roll-out is an important design choice. Our procedure relies on the fact that outcomes are observed after the full interference effects have been experienced. Different experiments and settings will naturally require different time windows, and  previous experiments and domain knowledge should guide these choices.  Finally, some practitioners may want to pose auto-regressive models in their experiments. Unfortunately, auto-regressive models pose challenges as they introduce complicated interactions with interference terms, time effects, and individual heterogeneity.  While a well-known practical issue in the context of two-way fixed effects~\citep{ArellanoBond1991DynamicPanel}, the consequences of auto-regressive terms are unclear in terms of interference, which we leave as a topic of future work.
\subsection{Future Directions}
We summarize several future directions of research. First, we have seen how roll-out schedules can influence our ability to conduct model selection effectively. A theoretical study of model selection requires a formal language for model misspecification in the presence of interference, which we leave for future work. A close study of the design-based perspective posed in our work may yield fruit. While requiring more engineering resources, the experimenter may sometimes be able to adaptively choose a roll-out schedule that maximizes the information that can be learned from the experiment before fully launching an intervention. Similarly, she may mitigate the non-stationarity of interference effects by appropriately choosing periods in a roll-out.  A third direction may consider how to carefully incorporate auto-regressive terms under the presence of interference, which may prove useful from a modeling perspective. Finally, while we have empirically shown that the \textsc{LOPO} procedure tends to perform reliably, we mention some possible extensions in Section~\ref{subsec:limitations}.

\paragraph{Acknowledgement}
We thank Kevin Han, Shuangning Li, Jialiang Mao, and Han Wu for their thoughtful feedback.

\bibliographystyle{apalike}
\bibliography{bibliography} 

\pagebreak
\appendix

\section{Characterizing the Distribution of Roll-out Designs}\label{subsec:rollout}

In this section of the appendix, we analyze two different procedures to implement roll-out designs. The first design, known as the completely randomized roll-out, is defined in Definition \ref{def:CRD} and is used throughout our paper and by contemporaneous work, e.g., \citep{yuetal2022rollout}. In each period $t$, let $\mathcal{S}_t\subseteq[N]$ be the set of newly treated units. The probability that a unit $i$ is selected for treatment  is
\begin{equation}
    \P[i \in \mathcal{S}_t] = \frac{Np_t}{N-\lceil N\sum_{j=1}^{t-1}p_j \rceil}.
\end{equation}
Evidently, a completely randomized roll-out  is a Markov chain whose  state transitions occur according to,
$$Z_{i}^t=\begin{cases}1 & Z_{i}^{t-1}=1 \text{ or } i \in \mathcal{S}_t \\ 0 & \text{otherwise}\end{cases}.$$

With this in hand, we can  compute the marginal distribution of $Z_i^t$.
\begin{lemma} \label{lem:crd_dist}
The distribution of $Z^t$ under Definition \ref{def:CRD} is given by
\begin{equation}yuetal2022rollout
 \P[Z_{i}^t=1] = \sum_{k=1}^t p_t
\end{equation}
\end{lemma}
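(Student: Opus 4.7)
The plan is to prove the result by a symmetry/exchangeability argument, with a short induction to make it rigorous. Throughout I will treat the quantities $N\sum_{j=1}^t p_j$ as integers so that the floor in Definition~\ref{def:CRD} can be dropped; the lemma as stated implicitly assumes this (otherwise $\P[Z_i^t=1]$ would equal $\lfloor N\sum_{k=1}^t p_k\rfloor/N$).

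First I would exploit the key structural property of Definition~\ref{def:CRD}: in every period $t$, the newly treated set $\mc{S}_t$ is selected uniformly at random from the currently untreated units, and the initial labeling of units carries no information. Consequently, for every $t$ and every permutation $\pi$ of $[N]$, the law of $(Z_{\pi(1)}^t, \dots, Z_{\pi(N)}^t)$ equals that of $(Z_1^t, \dots, Z_N^t)$. This exchangeability gives $\P[Z_i^t = 1] = \P[Z_j^t = 1]$ for all $i,j \in [N]$.

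Second, I would combine exchangeability with the deterministic count $\sum_{i=1}^N Z_i^t = N\sum_{k=1}^t p_k$ guaranteed by Definition~\ref{def:CRD}. Taking expectations and using linearity,
\begin{equation*}
N \cdot \P[Z_1^t = 1] \;=\; \sum_{i=1}^N \P[Z_i^t=1] \;=\; \E\!\left[\sum_{i=1}^N Z_i^t\right] \;=\; N\sum_{k=1}^t p_k,
\end{equation*}
so that $\P[Z_i^t=1] = \sum_{k=1}^t p_k$ for every $i$.

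As a sanity check (and as an alternative route if one prefers not to invoke exchangeability directly), I would verify the same conclusion by induction on $t$ using the Markov description of the design. The base case $t=1$ is immediate since $\mc{S}_1$ is a uniform $Np_1$-subset of $[N]$. For the inductive step, conditional on $Z_i^{t-1}=0$, unit $i$ belongs to the newly treated set $\mc{S}_t$ with the probability displayed just above the lemma, namely $Np_t / (N - N\sum_{j=1}^{t-1} p_j)$; combining this with the inductive hypothesis gives
\begin{equation*}
\P[Z_i^t=1] \;=\; \P[Z_i^{t-1}=1] + \P[Z_i^{t-1}=0]\cdot\frac{Np_t}{N-N\sum_{j=1}^{t-1}p_j} \;=\; \sum_{k=1}^{t-1}p_k + p_t.
\end{equation*}
There is no real obstacle here: the only subtle point is bookkeeping around the floor function, which I would address either by the integer-allocation assumption or by observing that the argument yields $\lfloor N\sum_{k=1}^t p_k\rfloor / N$ in general, matching $\sum_{k=1}^t p_k$ up to an $O(1/N)$ error that is irrelevant for the downstream results.
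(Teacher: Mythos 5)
Your proposal is correct, and your primary argument is genuinely different from the paper's. The paper proves the lemma by unrolling the Markov recursion $\P[Z_i^t=1]=\P[Z_i^{t-1}=1]+\P[i\in\mc{S}_t]\,\P[Z_i^{t-1}=0]$ period by period, plugging in the conditional inclusion probability $Np_t/(N-\lceil N\sum_{j<t}p_j\rceil)$ and telescoping to $\sum_{k\le t}p_k$ --- exactly your ``sanity check'' route. Your main route instead combines exchangeability of the units under the design with the deterministic treated count $\sum_i Z_i^t$, so that linearity of expectation gives the marginal immediately. This is shorter, does not require deriving or even knowing the per-period selection probability, and has the added virtue of producing the \emph{exact} answer $\lfloor N\sum_{k\le t}p_k\rfloor/N$, making explicit the $O(1/N)$ rounding discrepancy that the paper's statement and proof silently ignore (the paper also elides that its $\P[i\in\mc{S}_t]$ is really a conditional probability given $Z_i^{t-1}=0$). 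The recursive route, by contrast, is the one that transfers to the Bernoulli roll-out of Definition~\ref{def:bern_rollout}, where the treated count is random and the counting argument is unavailable. Minor note: the lemma's display has a typo ($p_t$ should be $p_k$ inside the sum), which both you and the paper's proof correctly resolve to $\sum_{k=1}^t p_k$.
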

\begin{proof} Let $\mathcal{S}_t\subseteq[N]$ be the set of newly treated units in period $t$.
$$\begin{aligned}
    1-\P[Z_{i}^t=0]&=\P[Z_{i}^t=1]
    \\&=\P[Z_{i}^{t-1}=1] + \P[i \in \mathcal{S}_t]\P[Z_{i}^{t-1}=0]
    \\&=\P[i\in \mathcal{S}_1] + \P[i\in \mathcal{S}_2]\P[Z_{i}^1=0] + \cdots + \P[i\in \mathcal{S}_t]\P[Z_{i}^{t-1}=0]
    \\&= p_1 + \sum_{k=2}^t \frac{Np_k}{N-\lceil N\sum_{j=1}^{k-1}p_j \rceil}\left(1- \P[Z_{i}^{k-1}=1]\right).
\end{aligned}$$
Conclude
$$\begin{aligned}
\P[Z_{i}^1=1]&=p_1
\\ \P[Z_{i}^2=1]&=p_1 + \frac{Np_2}{N(1-p_1)}(1-p_1)=p_1 + p_2
\\&\vdots
\\ \P[Z_{i}^t=1]&=\sum_{k=1}^t p_k.
\end{aligned}$$
\end{proof}
We now turn to the second roll-out implementation that is based on Bernoulli trials. This design appears in \cite{Li2022} and much of the analysis in this paper can also be carried through in this implementation.
\begin{definition}[Bernoulli Roll-outs] \label{def:bern_rollout} A \textbf{$T$-period Bernoulli roll-out} is an increasing set of treatment assignments, $\{Z^1, \dots, Z^T\}$ and treatment proportions $p=\{p_1,\dots,p_T\}$ where $\sum_{t=1}^T p_t = \bar{P} \leq 1$ such that the distribution of $Z$ follows
$$
Z_{i}^{1} \sim \bernoulli(p_1)
~~~\mbox{and}~~~Z_{i}^{t} \sim 
\begin{cases}
    1 & \text{if } Z_{i}^{t-1}=1 \\ 
    \bernoulli(p_t)  & \text{otherwise} 
    \end{cases}
$$
\end{definition}
\begin{lemma}
    \label{lem:dist_bern}
    Suppose that $Z$ is a roll-out given by Definition \ref{def:bern_rollout}. Then, $Z^t$ is a Markov chain with
    \begin{enumerate}
        \item $\P[Z_{i}^{t}=0]=\prod_{j=1}^t (1-p_j)$
        \item $\P[Z_{i}^{t}=1] = p_1 + p_2(1-p_1)+\cdots+p_t\prod^{t-1}_{j=1}(1-p_j)$.
    \end{enumerate}
\end{lemma}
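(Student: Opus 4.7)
The plan is to prove both statements by induction on $t$, exploiting the fact that the Bernoulli roll-out makes $Z_i^t$ a monotone Markov chain: once $Z_i^{t-1}=1$, the unit stays treated with probability one, and otherwise it receives an independent $\bernoulli(p_t)$ draw. The Markov property is immediate from reading off Definition \ref{def:bern_rollout}, since the conditional distribution of $Z_i^t$ given the entire history depends only on $Z_i^{t-1}$ and a fresh Bernoulli trial.

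For part (1), I would condition on $Z_i^{t-1}$. Because $Z_i^{t-1}=1$ forces $Z_i^t=1$, the only way to have $Z_i^t=0$ is if $Z_i^{t-1}=0$ and the fresh $\bernoulli(p_t)$ draw returns $0$. This yields the one-step recursion
\[
\P[Z_i^t = 0] \;=\; (1-p_t)\,\P[Z_i^{t-1}=0].
\]
Starting from the base case $\P[Z_i^1=0] = 1-p_1$ and unrolling the recursion gives $\P[Z_i^t=0]=\prod_{j=1}^t (1-p_j)$.

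For part (2), there are two equivalent routes. The direct approach is to partition $\{Z_i^t=1\}$ by the first period $k \le t$ at which unit $i$ is newly treated. Independence of the Bernoulli draws across periods gives $\P[\text{first treated at period }k] = p_k \prod_{j=1}^{k-1}(1-p_j)$, and summing over $k=1,\dots,t$ delivers the stated expression. Alternatively, one can take complements of part (1) and verify the telescoping identity
\[
1-\prod_{j=1}^{t}(1-p_j) \;=\; \sum_{k=1}^{t}\big[1-(1-p_k)\big]\prod_{j=1}^{k-1}(1-p_j) \;=\; \sum_{k=1}^{t} p_k\prod_{j=1}^{k-1}(1-p_j),
\]
where the middle sum telescopes because consecutive partial products cancel.

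There is no substantive obstacle here; the entire argument is a short induction once the Markov/independence structure of Definition \ref{def:bern_rollout} is read off. The only subtlety worth flagging is to be explicit that the Bernoulli draw at period $t$ is independent of the history given $Z_i^{t-1}$, so the factor $(1-p_t)$ multiplies cleanly and the product form emerges.
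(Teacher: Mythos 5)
Your proposal is correct and follows essentially the same route as the paper: both condition on $Z_i^{t-1}$ via the law of total probability to get the one-step recursion $\P[Z_i^t=0]=(1-p_t)\P[Z_i^{t-1}=0]$ for part (1), and unroll the analogous recursion for part (2), which is equivalent to your decomposition by first-treatment period. The telescoping/complement argument you mention as an alternative is a nice sanity check but not needed.
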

\begin{proof} To see Property 1,  apply the law of total probability and $\P[Z_{i}^t=0|Z_{i}^{t-1}=1]=0$
$$\begin{aligned}
\P[Z_{i}^1=0]&=1-p_1 \\
\P[Z_{i}^2=0]&=\P[Z_{i}^2=0|Z_{i}^1=0]\P[Z_{i}^1=0] = (1-p_2)(1-p_1)\\
&\vdots \\
\P[Z_{i}^{t}=0] &= \prod_{j=1}^t(1-p_t).
\end{aligned}$$
To see Property 2, we again apply the law of total probability
$$\begin{aligned}
\P[Z_{i}^{1}=1]&=p_1 \\
\P[Z_{i}^{2}=1]&= \P[Z_{i}^{2}=1|Z_{i}^{1}=0]\P[Z_{i}^{1}=0] + \P[Z_{i}^{2}=1|Z_{i}^{1}=1]\P[Z_{i}^{1}=1] \\ & = p_2(1-p_1) + 1\cdot p_1 = p_1 + p_2(1-p_1)\\
&\vdots \\
\P[Z_{i}^{t}=1] &= p_1 + p_2(1-p_1)+\cdots+p_t\prod^{t-1}_{j=1}(1-p_j).
\end{aligned}$$
\end{proof}
\section{Proof of Identification Results in Section \ref{sec:Identification}}
\subsection{Proof of Proposition \ref{prop:neighbor_spillovers}}
\label{subsec:pf_prop_spill}
We begin by writing out $X$ 
\begin{align*}
X & =\left[\begin{array}{ccc}
1 & Z_{1}^{1} & f_{1}(Z_{\cdot}^1)\\
\vdots & \vdots & \vdots\\
1 & Z_{N}^{1} & f_{N}(Z_{\cdot}^1)\\
\vdots & \vdots & \vdots\\
1 & Z_{1}^{T} & f_{1}(Z_{\cdot}^T)\\
\vdots & \vdots & \vdots\\
1 & Z_{N}^{T} & f_{N}(Z_{\cdot}^T)
\end{array}\right]\in\mathbb{R}^{TN\times3}.
\end{align*}
$X^\top X$ is non-singular when the columns of $X$ are linearly independent. Suppose by way of contradiction that the columns $X$ are dependent: there is a nonzero vector $\lambda \in \mathbb{R}^3$  such that
$$\lambda_1X_1 + \lambda_2X_2 + \lambda_3X_3 = 0.$$

Recall from our hypothesis that  there is a $i \in [N]$ and $t \in [T]$  with $Z^{t}_i = 0$ and $f_i(Z^{t})\neq 0$. From the linear dependence of the columns, we have 
\begin{equation}
\label{eq:l1_eq_l3}
\lambda_1 + \lambda_3f_i(Z^t) = 0 
\implies \lambda_1 = -\lambda_3f_i(Z^t).
\end{equation}
Now, take $j\in[N]$, $t'\in[T]$ such that $Z_j^{t'}=0$ and $f_i(Z^t) \neq f_j(Z^{t'})$ as assumed in our hypothesis. Linear dependence again implies
\begin{align*}
 \lambda_1 + \lambda_3f_j(Z^{t'}) = 0  
~\iff~ -\lambda_3f_i(Z^t) + \lambda_3f_j(Z^{t'}) = 0  
~\iff~  \lambda_3(f_j(Z^{t'})-f_i(Z^{t})) =0 
~\Rightarrow~ \lambda_3 = 0,
\end{align*}
where we use $f_j(Z^{t'})\neq f_i(Z^{t})$  in the final line. Conclude $\lambda_1 = 0$ from Eq.~\eqref{eq:l1_eq_l3}. 

Now, take any $k\in [N]$ and $q\in[T]$ such that $Z_{k}^q = 1$, which is guaranteed to exist by the definition of the roll-out. Notice that if the columns of $X$ are linearly dependent, $\lambda_1=\lambda_3=0$ implies
\begin{align*}
\lambda_1 + \lambda_2 + \lambda_3f_j(Z^{t'}) &= 0 
\Rightarrow \lambda_2 = 0.
\end{align*}
This is a contradiction since $\lambda \neq \bf{0}$.

\subsection{Proof of Corollary \ref{cor:spillover_identification}}
\label{subsec:pf_cor_id}

Since  $Z_j^t = 1$ and  $i\in \mathcal{G}_1(j) \iff j \in \mathcal{G}_1(i)$, unit $i$'s interference term  in model~\eqref{eqn:N1-model} is positive:
$f_i(\vec{z}) = \sum_{j \in \mathcal{G}_1(i)} \vec{z}_{j} > 0$.
Noting that unit $i$ is untreated $Z_i^t = 0$, 
the hypothesis of Proposition~\ref{prop:neighbor_spillovers} is satisfied for $(i,t)$ and $(i,1)$.

\subsection{Proof of Theorem \ref{thm:uniqueness}}
\label{subsec:pf_thm_id}

To prove this result, we consider two optimization problems that solve for the upper and lower bounds of the total treatment effect.
\begin{equation}\label{eq:opt_procedure}
\begin{aligned}
\max_{\theta\in\mathbb{R}^{k}} & \;c^{\top}\theta & \min_{\theta\in\mathbb{R}^{k}} & \;c^{\top}\theta\\
s.t. & \;X^{\top}X\theta=X^{\top}Y & s.t. & \;X^{\top}X\theta=X^{\top}Y
\end{aligned}
\end{equation}
The dual problems are given by
\begin{equation}\label{eq:opt_procedure_dual}
\begin{aligned}
\min_{p\in\mathbb{R}^{k}} & \;p^{\top}X^{\top}Y & \max_{p\in\mathbb{R}^{k}} & \;p^{\top}X^{\top}Y\\
s.t. & \;p^{\top}X^{\top}X=c^{\top} & s.t. & \;p^{\top}X^{\top}X=c^{\top}
\end{aligned}
\end{equation}

The dual problems are feasible if and only if $c$ lies within
$\text{span}(X^{\top}X) = \text{span}(X^{\top})$ where $\text{span}(\cdot)$ refers to the column span. We show that this condition implies that the lower and upper bounds~\eqref{eq:opt_procedure} are equal,  yielding a unique estimate of the total treatment effect. If $X^{\top}X$ is nonsingular, this is evident. Otherwise, let $$\Theta_0 = \{\theta \in \R^K : X^\top X \theta = X^\top Y \}. $$ Suppose $X^{\top}X$ is singular so that there may exist $\theta,\theta'\in\Theta_{0}$
with $\theta\neq\theta'$. Let $\tilde{\theta},\tilde{p}$ and $\undertilde{\theta},\undertilde{p}$ be the optimal primal-dual pair for the upper and lower bound problems respectively. 
From primal feasibility  $\tilde{\theta},\undertilde{\theta}\in\Theta_{0}$
and dual feasibility $\tilde{p},\undertilde{p}\in\left\{ p:X^{\top}Xp=c\right\}$,
 strong duality gives
\begin{align*}
c^{\top}\tilde{\theta} & =\tilde{p}^{\top}X^{\top}Y
  =\tilde{p}^{\top}X^{\top}X\undertilde{\theta} & \undertilde{\theta}\in\Theta_{0}\\
 & =(X^{\top}X\tilde{p})^{\top}\undertilde{\theta} & \text{\ensuremath{X^{\top}X} is symmetric}\\
 & =c^{\top}\undertilde{\theta} & X^{\top}X\tilde{p}=c\text{ by dual feasibility}
\end{align*}

\subsection{Proof of Theorems \ref{thm:variance_fixed} and \ref{thm:variance}}
\label{subsec:pf_thm_var}


By the definition of the minimum eigenvalue, we have the tautological bound
$$
\ltwo{\what{\theta} - \theta\opt}^2 
\le  \lambda_{\min}\left(X^\top X\right)^{-1}
\ltwo{X\what{\theta} - X\theta\opt}^2.
$$ 
Conditioning on $X$ so that $\lambda_{\min}(X^\top X)$ is deterministic, Cauchy-Schwarz gives
\begin{equation}
\label{eq:MSE_leq_min_eig}
    \E_X|c^\top(\what\theta - \theta\opt)|^2 
    \le \ltwo{c}^2 
    \E_X\ltwo{\what\theta - \theta\opt}^2
    \le 
    \ltwo{c}^2  
    \lambda_{\min}\left(X^\top X\right)^{-1}
    \cdot \E_X\ltwo{X\what{\theta} - X\theta\opt   }^2.
\end{equation}
To bound the final term in the preceding display, we use an adaptation of \citet[Theorem 2.2]{rigollet2015high}; recall that $X\in \R^{NT\times K}$ and $Y=X\theta\opt + \epsilon$.
\begin{lemma}
\label{lemm:MSE_dim_red} 
Under the linear model of \eqref{eq:additive-model}, suppose either Assumption \ref{assum:fixed} or \ref{assum:iid} hold. 
Then, the least squares estimator $\what{\theta} = (X^\top X)^{-1}X^\top Y$
satisfies
\begin{align}
\label{eq:mspe}
\E_X \ltwo{X\what{\theta} - X\theta\opt   }^2 
    \le
    \begin{cases}
        4\sigma_{\max}^2 KT
        & \mbox{under Assumption~\ref{assum:fixed}} \\
        4\sigma_{\max}^2 K
        & \mbox{under Assumption~\ref{assum:iid}}
    \end{cases}.
\end{align}
\end{lemma}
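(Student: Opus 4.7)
My plan is to reduce the prediction error to a trace calculation involving the hat matrix $H = X(X^\top X)^{-1} X^\top$, then exploit that $H$ is an orthogonal projection of rank at most $K$.

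Since $Y = X\theta\opt + \epsilon$ and $\what\theta = (X^\top X)^{-1} X^\top Y$, substitution gives $X\what\theta - X\theta\opt = H\epsilon$. Conditioning on $X$ and using that $H$ is symmetric and idempotent,
\[
\E_X \ltwo{X\what\theta - X\theta\opt}^2
= \E_X[\epsilon^\top H \epsilon]
= \operatorname{tr}\!\big(H \cdot \E[\epsilon\epsilon^\top]\big).
\]
Because $\operatorname{tr}(H) = \operatorname{rank}(X) \le K$, for any PSD matrix $\Sigma$ the inequality $\operatorname{tr}(H\Sigma) \le \lambda_{\max}(\Sigma)\cdot K$ holds. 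The problem therefore reduces to bounding $\lambda_{\max}(\E[\epsilon\epsilon^\top])$ in each noise regime.

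Under Assumption~\ref{assum:iid}, the entries of $\epsilon \in \R^{NT}$ are independent with variance at most $\sigma_{\max}^2$, so $\E[\epsilon\epsilon^\top]$ is diagonal with $\lambda_{\max} \le \sigma_{\max}^2$, yielding the $\sigma_{\max}^2 K$ bound directly. Under Assumption~\ref{assum:fixed}, stacking outcomes by period gives $\epsilon = \mathbf{1}_T \otimes \tilde\epsilon$ where $\tilde\epsilon \in \R^N$ has independent entries with variances at most $\sigma_{\max}^2$. Then $\E[\epsilon\epsilon^\top] = (\mathbf{1}_T\mathbf{1}_T^\top) \otimes \operatorname{diag}(\var(\tilde\epsilon_i))$, whose nonzero eigenvalues equal $T\var(\tilde\epsilon_i) \le T\sigma_{\max}^2$. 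This produces the $\sigma_{\max}^2 KT$ bound.

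The main subtlety is the correlated regime: the $T$-factor loss under Assumption~\ref{assum:fixed} is not an artifact of the proof but reflects genuine inflation of the noise variance along the all-ones time direction, which $H$ passes through unattenuated whenever the column span of $X$ contains time-constant vectors---as it does through the unit fixed effects $\alpha\opt$. The multiplicative constant of $4$ in the stated bound is the price of adapting~\citet[Theorem 2.2]{rigollet2015high}, which controls $\ltwo{H\epsilon}^2$ through a sub-Gaussian-style concentration argument rather than a bare second-moment identity; this only inflates numerical constants and leaves the $K$ vs.\ $KT$ scaling unchanged.
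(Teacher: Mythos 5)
Your proof is correct and in fact yields a slightly stronger conclusion than the lemma states: the hat-matrix identity $X\what{\theta}-X\theta\opt=H\epsilon$ together with $\E_X[\epsilon^\top H\epsilon]=\operatorname{tr}\bigl(H\,\E[\epsilon\epsilon^\top]\bigr)\le \lambda_{\max}\bigl(\E[\epsilon\epsilon^\top]\bigr)\cdot\operatorname{tr}(H)$ and $\operatorname{tr}(H)=\operatorname{rank}(X)\le K$ gives $\sigma_{\max}^2 K$ and $\sigma_{\max}^2 KT$ with no factor of $4$. The paper takes a genuinely different route: it starts from the basic inequality $\ltwo{Y-X\what{\theta}}^2\le\ltwo{Y-X\theta\opt}^2=\ltwo{\epsilon}^2$, rearranges to bound $\ltwo{X\what{\theta}-X\theta\opt}$ by $2\,\epsilon^\top X(\what{\theta}-\theta\opt)/\ltwo{X(\what{\theta}-\theta\opt)}$, and controls this by $2\sup_{\mu\in B_K}\epsilon^\top\Phi\mu=2\ltwo{\Phi^\top\epsilon}$ for an orthonormal basis $\Phi$ of the column span of $X$ --- this is the adaptation of \citet[Theorem 2.2]{rigollet2015high} and is exactly where the constant $4$ enters (your guess that it comes from this adaptation is right, though the paper's argument is a second-moment bound, not a concentration argument). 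From that point the two proofs coincide: both reduce to $\lambda_{\max}(\cov(\epsilon))\le T\sigma_{\max}^2$ under Assumption~\ref{assum:fixed} (the paper via the rank-one blocks $\vec{\sigma}_i\vec{\sigma}_i^{\top}$, you via the Kronecker structure --- same computation) and $\le\sigma_{\max}^2$ under Assumption~\ref{assum:iid}. What the paper's basic-inequality route buys is generality, since it uses only that $\what\theta$ minimizes the least-squares objective and so would survive constrained or penalized variants where no closed form exists; what your route buys is a sharper constant and a transparent explanation of the $T$-factor, and your remark that the inflation is genuine --- the unit fixed-effect columns of $X$ lie in the top eigenspace of $\cov(\epsilon)$ under Assumption~\ref{assum:fixed}, so $H$ cannot attenuate it --- is correct and is not made explicit in the paper.
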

\noindent Applying the lemma, we have the desired result. 

In the remainder of the section, we prove the bound~\eqref{eq:mspe}. Since $\what{\theta}$ is the least squares estimator, we have
$$\ltwo{Y-X\what{\theta}}^2 \leq \ltwo{Y-X\theta\opt   }^2=\ltwo{\epsilon}^2.$$
As we assume a well-specified linear model $Y = X\theta\opt + \epsilon$, this implies
$$
\ltwo{\epsilon}^2 \ge 
\ltwo{Y-X\what{\theta}}^2= \ltwo{X\theta\opt    + \epsilon -X\what{\theta}}^2 = \ltwo{X\what{\theta} - X\theta\opt   }^2 -2\epsilon^\top X(\what{\theta} - \theta\opt   )+\ltwo{\epsilon}^2.
$$
Rearranging terms, we get
\begin{equation}
\label{eq:cov_bound}
\ltwo{X\what{\theta} - X\theta\opt   } 
\leq 2\frac{\epsilon^\top X(\what{\theta}-\theta\opt   )}{\ltwo{X(\what{\theta}-\theta\opt   )}}.
\end{equation}

Let $\Phi \in \R^{NT\times K}$ be an orthonormal basis for the column span of $X$. Then, there exists $\nu\in\R^K$ such that $X(\what{\theta} -\theta\opt   )=\Phi\nu$. Letting $B_K = \{u\in\R^K:||u||_2\leq 1\}$ be the unit ball in $\R^K$, conclude
$$\begin{aligned}
\frac{\epsilon^\top X(\what{\theta}-\theta\opt   )}
{\ltwo{X(\what{\theta}-\theta\opt   )}} 
= \frac{\epsilon^\top\Phi\nu}{\ltwo{\Phi\nu}} 
& =\frac{\epsilon^\top\Phi\nu}{\ltwo{\nu}} 
& \qquad  \text{by orthonormality of $\Phi$}
\\&\leq \sup_{\mu\in B_k}\epsilon^\top\Phi\mu & \text{since $\nu/||\nu||_2\in B_K$}.
\end{aligned}$$
From Equation~\eqref{eq:cov_bound}, we arrive at
\begin{equation}
\ltwo{X\what{\theta} -X\theta\opt   }^2 
\leq 4 \cdot \left( \sup_{\mu\in B_k}\epsilon^\top\Phi\mu\right)^2.
\end{equation}
Since the last expression is maximized at $\mu = \Phi^\top \epsilon/\ltwo{\Phi^\top \epsilon}$
$$
\E_X\left[\left(
\sup_{\mu\in B_k} \epsilon^\top\Phi\mu\right)^2\right] 
= \E_X\left[\sum_{i=1}^K 
\left([\Phi^\top \epsilon]_i\right)^2\right] 
\leq K \cdot \max_{i=1,\dots,K}\var_X([\Phi^\top \epsilon]_i).
$$

We now consider the case of perfectly correlated and fully independent errors separately.
\paragraph{Case for Assumption \ref{assum:fixed} (Theorem~\ref{thm:variance_fixed})} Recall that under Assumption \ref{assum:fixed}, $\epsilon_{i,t}$ are perfectly correlated across $t$. For convenience, we define the vector $\vec{\sigma}_i = [\sigma_{i,1},\dots,\sigma_{i,T}]^\top\in\R^T$ and let $\Sigma_i = \vec{\sigma_i}\left(\vec{\sigma_i}\right)^\top$. 
We begin by noting that
$$\cov_X(\Phi^\top \epsilon) = \Phi^\top \cov(\epsilon) \Phi  \preceq \lambda_{\max}(\cov(\epsilon)) I_{K\times K},$$
where Assumption \ref{assum:fixed} imposes a block-diagonal structure on $\cov(\epsilon)$ such that each block is given by $\Sigma_i$. 
Letting  $\eta = [\eta_1,\dots,\eta_N]^\top\in\R^{NT}$ so that   $\eta_i \in \R^T$,
we bound the variational representation for the maximum eigenvalue of $\cov(\epsilon)$
$$\begin{aligned}\lambda_{\max}(\cov(\epsilon)) &= \max_{\ltwo{\eta} = 1} \eta^\top \cov(\epsilon)\eta
= \max_{\ltwo{\eta} = 1} \sum_{i=1}^{N}\eta_i^\top \Sigma_i \eta_i
= \max_{\ltwo{\eta} = 1} \sum_{i=1}^{N} \left(\eta_i^\top \vec{\sigma}_i\right)^2
\\& \leq \max_{\ltwo{\eta} = 1} \sum_{i=1}^{N}  
\ltwo{\eta_i}^2 \ltwo{\vec{\sigma}_i}^2
\leq  T \sigma_{\max}^2 \cdot \max_{\ltwo{\eta} = 1} \sum_{i=1}^{N}  
\ltwo{\eta_i}^2 
 = T\cdot \sigma_{\max}^2. 
\end{aligned}$$
Noting that
    $\cov_X(\Phi^\top \epsilon)  \preceq T\cdot \sigma_{\max}^2 \cdot I_{K\times K}$, conclude
    $$\E_X \ltwo{X\what{\theta} - X\theta\opt   }^2 \leq 4\sigma_{\max}^2 KT.$$
\paragraph{Case for Assumption \ref{assum:iid} (Theorem~\ref{thm:variance})} Under Assumption \ref{assum:iid}, $\epsilon_{i,t}$ is independent across both $i$ and $t$. In this case, $\cov(\epsilon)$ is a diagonal matrix with entries $\var(\epsilon_{i,t})$ and evidently, $\lambda_{\max}(\cov(\epsilon))\leq \sigma_{\max}^2$. As before, 
    $\cov_X(\Phi^\top \epsilon)  \preceq  \sigma_{\max}^2 \cdot I_{K\times K}$,
which implies
$$ \E_X \ltwo{X\what{\theta} - X\theta\opt   }^2  \leq 4\sigma_{\max}^2 K.$$
\subsection{Proof of Lemma \ref{lemm:conv_complete_linear} and \ref{lemm:conv_complete_linear_iid}}
\label{subsec:pf_lemm_comp}
Under the model in Proposition \ref{prop:neighbor_spillovers},  $X^\top X$ can be computed as
$$X^\top X = \left(\begin{array}{ccc}
1 & \cdots & 1\\
Z^{1} & \cdots & Z^{T}\\
f(Z^{1}) & \cdots & f(Z^{T})
\end{array}\right)\cdot\left(\begin{array}{ccc}
1 & Z^{1} & f(Z^{1})\\
\vdots & \vdots\\
1 & Z^{T} & f(Z^{T})
\end{array}\right)
	=\left(\begin{array}{ccc}
NT & \sum_{i,t}Z_{i}^{t} & \sum_{i,t}f_{i}(Z^{t})\\
\sum_{i,t}Z_{i}^t & \sum_{i,t}Z_{i}^t & \sum_{i,t}Z_{i}^tf_{i}(Z^{t})\\
\sum_{i,t}f_{i}(Z^{t}) & \sum_{i,t}Z_{i}^tf_{i}(Z^{t}) & \sum_{i,t}\left(f_{i}(Z^{t})\right)^{2}
\end{array}\right)$$
where $Z^t = [Z^t_1,\dots,Z^t_N]^\top$ and $f(Z^t) = [f_1(Z^t),\dots,f_N(Z^t)]^\top$.  Under a completely randomized design with allocation vector $\vec{p}$ and linear-in-means interference, we can fully characterize the design matrix $X$. Specifically, we know that the first column is the vector ${\bf{1}}\in\R^{NT}$ (i.e. the intercept term), and the second column is the stacked treatment vectors $[Z^1 ,\dots,Z^T]^\top\in\R^{NT}$. The last column is the interference term which can be computed exactly because we know since $f_i(Z^t) = \frac{1}{|\mathcal{G}(i)|}\sum_{j\in \mathcal{G}(i)}Z_j^t$ and $\mathcal{G}$ is defined by a fully connected network. Therefore, $|\mathcal{G}(i)|=N-1$ for every $i\in[N]$. Then if unit $i$ is treated ($Z_i^t = 1$) at period $t$ we have
$$f_i(Z^t) = \frac{(\sum_{k=1}^t p_k)N-1}{N-1}.$$
Otherwise, if unit $i$ is untreated ($Z_i^t = 0$) at period $t$ we have
$$f_i(Z^t) = \frac{(\sum_{k=1}^t p_k)N}{N-1}.$$
From here we can easily compute the elements of the matrix $X^\top X$ which are given by
$$\begin{aligned} \sum_{i,t} Z_{i}^t &= \sum_{i,t} f_i(Z^t) = N(Tp_1 + (T-1)p_2 + \cdots + 1\cdot p_T) = N\sum_{t=0}^{T-1}t\cdot p_{T-t},
\\ \sum_{i,t} Z_{i}^tf_i (Z^t) &= \frac{p_1N-1}{N-1}p_1N + \cdots + \frac{(N\sum_{t=1}^{T-1}p_t -1)}{N-1}\left(N\sum_{t=1}^{T-1}p_t\right) + \frac{(N\sum_{t=1}^Tp_t -1)}{N-1}\left(N\sum_{t=1}^Tp_t\right)
\\&= \sum^T_{J=1}\left(\frac{\left(\sum_{t=1}^Jp_t\right)N-1}{(N-1)}\right)^2\left(N\sum_{t=1}^Jp_t\right), \text{ and}
\\ \sum_{i,t} f_i(Z^t)^2 &= \sum^T_{J=1}\left(\frac{\left(\sum_{t=1}^Jp_t\right)N-1}{(N-1)}\right)^2\left(N\sum_{t=1}^Jp_t\right) + \left(\frac{\left(\sum_{t=1}^Jp_t\right)N}{(N-1)}\right)^2\left(N\left(1-\sum_{t=1}^Jp_t\right)\right).\end{aligned}$$

Plugging these into our matrix $X^\top X$, we can now solve the characteristic polynomial, $\det(\lambda I - X^\top X) = 0$ as $N\to\infty \text{ and } T\to \infty$ and examine how the eigenvalues scale with $NT$. Due to the analytic complexity of the problem, we compute this in Mathematica using the \textsc{AsymptoticSolve} method which yields that for $NT$ large enough
$$\lambda_i \asymp \frac{NT}{C_i(\vec{p})} \text{ for all $i$}$$
where $C_i(\vec{p})$ is a function that only depends on the increment vector $\vec{p}$.

\paragraph{Case for Assumption \ref{assum:fixed} (Lemma \ref{lemm:conv_complete_linear})} Define $\bar{C}_1 = K \cdot \sigma_{\max}^2 \cdot C_{i\opt   }(\vec{p})$ where $i\opt    = \arg\min_i \lambda_i$ and $K$ is fixed under the model. Under this model $c=[0,1,1] \implies \ltwo{c}^2 = 2$. Applying Theorem \ref{thm:variance_fixed} and plugging in these values yields the desired result: for some $M\in \R$ and $NT>M$
$$\E\left[|c^\top(\theta\opt   -\what{\theta})|^2\right] \leq KT \cdot \sigma^2_{\max} \cdot \left(\frac{NT}{C_i(\vec{p})}\right)^{-1} = \frac{8\bar{C}_1}{N}.$$
\paragraph{Case for Assumption \ref{assum:iid} (Lemma \ref{lemm:conv_complete_linear_iid})}
 Define $\bar{C}_2 = K \cdot \sigma_{\max}^2 \cdot C_{i\opt   }(\vec{p})$ where $i\opt    = \arg\min_i \lambda_i$ and $\sigma_{\max}^2$ is from Assumption \ref{assum:iid}.  
 Again, $c=[0,1,1] \implies \ltwo{c}^2 = 2$. Plugging in these values and noting that $T$ when using Theorem \ref{thm:variance}.
 Conclude that for some $M\in\R$ large enough, whenever $NT>M$
$$\E\left[|c^\top(\theta\opt   -\what{\theta})|^2\right] \leq K \cdot \sigma^2_{\max} \cdot \left(\frac{NT}{C_i(\vec{p})}\right)^{-1} = \frac{8\bar{C}_2}{NT}.$$

\section{Estimation of the TTE under Model Misspecification}
In Figure \ref{fig:yucompare}, we assess the performance of our estimator based on the potential outcomes model specified in \cite{yuetal2022rollout} (Sec. 5 Eq. 4) and reproduced in \eqref{eq:PI_simulation}. We apply $\textsc{LOPO}$ to variations of the model in \ref{eq:true_model} including a model with a second-order term. In the following experiment, we generate data using the model~\eqref{eq:PI_simulation} with $\beta=2$. We also consider a misspecified Lagrange interpolation estimator of \cite{yuetal2022rollout} with $\beta \neq 2$. To facilitate comparison, we follow \cite{yuetal2022rollout} and do not include a noise term so that any error is due to model misspecifcation alone.

\label{sec:misspec}
\begin{figure}[ht]
    \centering
        \includegraphics[width=0.7\textwidth]{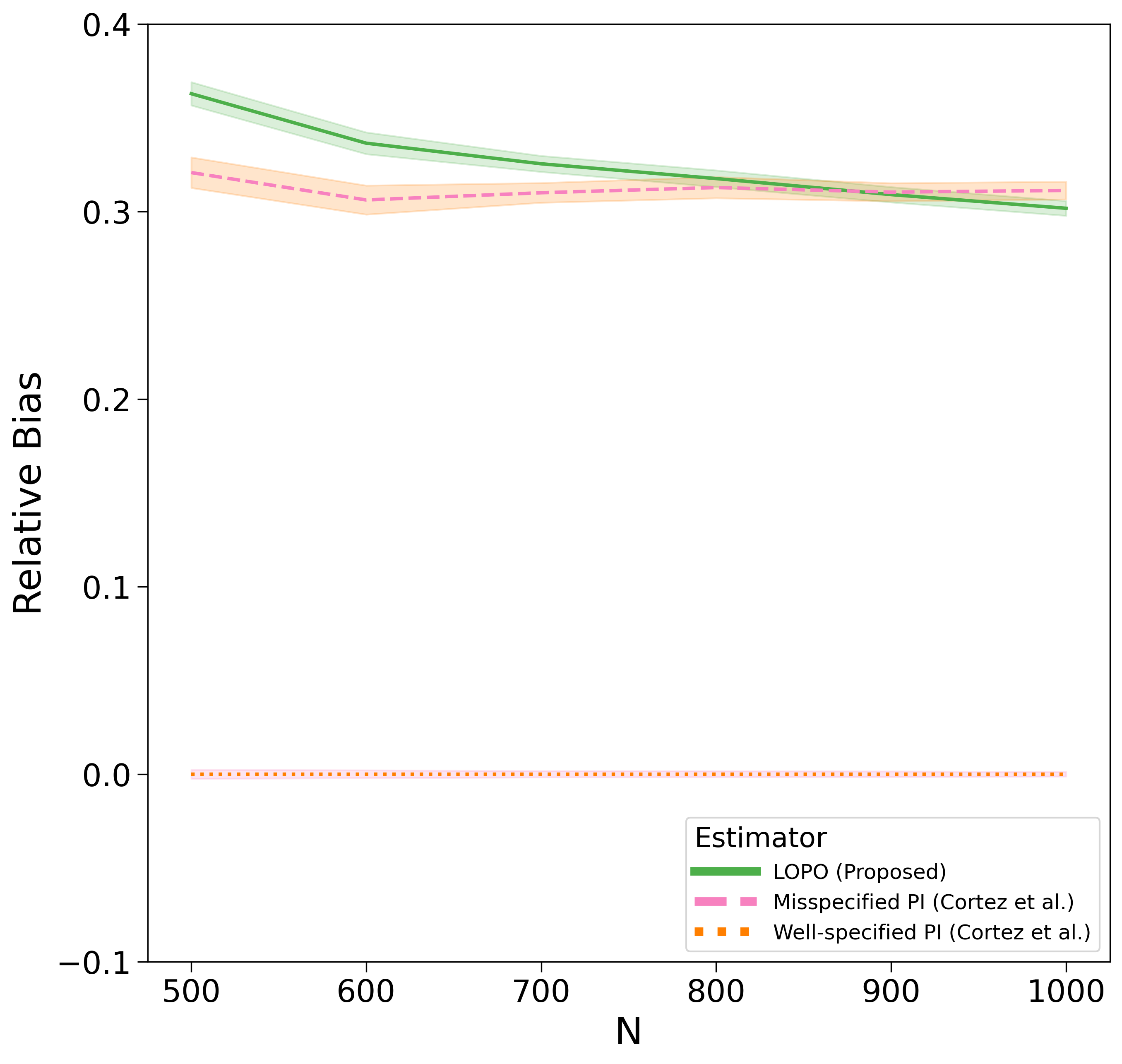}
    \caption{Relative Bias of Estimators of the TTE: figures are generated by averaging the results of 100 experiments. In each experiment sample size is given by the x-axis, $N$, with $T=4$ and an uneven roll-out. We display bootstrapped $95\%$ confidence intervals of the relative bias. We use the DGP given by \cite{yuetal2022rollout} in Eq. \eqref{eq:PI_simulation} with $r=2$ and $\beta=2$. Note that the correctly specified PI estimator (with $\beta =2$) has constant zero bias, hence there is no visible CI.}
\label{fig:yucompare}
\end{figure}

Firstly, we validate that in the well-specified case, we reproduce the results of \cite{yu2022estimating}, which demonstrates that their estimator has zero bias. As the sample size increases,  our misspecified estimator using \textsc{LOPO} preforms similarly to the Lagrange interpolation estimator. The phenomenon shows how linear models are able to approximate polynomial models. This also underscores the need to consider model misspecification in practice as even slight changes in the interference model estimated can result in possibly large relative biases.

\end{document}